
\documentclass{llncs}
\usepackage{llncsdoc}
\usepackage{amssymb}
\usepackage{amsmath}
\usepackage{epigraph}
\usepackage{paralist}
\usepackage{url}
\usepackage{tikz}
\usepackage{color} 
\usepackage{graphics}
\usepackage{enumerate}
\usepackage{microtype}


\usepackage[boxed, ruled,noend,linesnumbered]{algorithm2e}

\usepackage{wrapfig}
\usepackage{cite}

\newcommand{\myparagraph}[1]{\vspace{2pt}\noindent \textbf{#1}}

\newcounter{linenumber}

\def\P{\ensuremath{\mathcal{P}}}

\def\A{\ensuremath{\mathcal{A}}}

\def\I{\ensuremath{\mathcal{I}}}

\def\HSS{\mathit{csize}}

\newcommand{\setcon}{\mathit{setcon}}
\newcommand{\remove}[1]{}

\def\P{\mathcal{P}}

\def\I {\mathcal{I}}

\def\fair{fair}
\def\Fair{Fair}


\newcommand{\ignore}[1]{}


\newcommand{\trnote}[1]{\textbf{TR: #1}}

\newcommand{\power}[1]{\alpha(#1)}

\pagestyle{plain}

\title{Agreement Functions \\ for Distributed Computing Models}

\author{
Petr Kuznetsov 
\and
Thibault Rieutord\thanks{Supported by ANR project
  DISCMAT, grant agreement ANR-14-CE35-0010-01.}}

\institute{LTCI, T\'el\'ecom ParisTech, Universit\'e Paris Saclay }

\date{}

\begin{document}

\maketitle

\begin{abstract}
The paper proposes a surprisingly simple characterization of
a large class of models of distributed computing, 
via an \emph{agreement function}: for each set of
processes, the function determines the best level of set consensus these
processes can reach.
We show that the task computability of a large class of \emph{{\fair}}
adversaries that includes, in particular \emph{superset-closed} and
\emph{symmetric} one, is precisely captured by
agreement functions. 
\end{abstract}



\section{Introduction}

In general, a model of distributed computing is a set of \emph{runs}, i.e., all
allowed interleavings of \emph{steps} of concurrent processes.
There are multiple ways to define these sets of runs in a tractable way. 

A natural one is based on \emph{failure models} that
describe the assumptions on where and when failures might occur. 
By the conventional assumption of  \emph{uniform} failures, processes fail with equal and
independent probabilities, giving rise to  the classical model of
\emph{$t$-resilience}, where at most $t$ processes may fail in a given run.
The extreme case of $t=n-1$, where~$n$ is the number of processes in
the system, corresponds to the
\emph{wait-free} model.
\remove{
A solution of a distributed task in the wait-free model must ensure
that a process is able to make progress (i.e., obtain an
\emph{output} on its task invocation) in a finite number of its own
steps, regardless of the behavior of other processes.
In this paper we assume that processes communicate by
reading and writing to a shared memory.  
}

The notion of \emph{adversaries}~\cite{DFGT11} generalizes
uniform failure models by defining a set of process subsets, called \emph{live
  sets}, and assuming that in every model run, the set of \emph{correct}, i.e., taking infinitely many steps, processes
must be a live set.
In this paper, we consider adversarial read-write shared memory
models, i.e., sets of runs in which processes communicate via reading
and writing in the shared memory and live sets define which sets of
processes can be correct.  

\remove{
Another way to define a model is to bound  the \emph{concurrency} of
executions: how many processes can be concurrently \emph{active},
i.e., having started their computation but not yet being done with it.    
Assuming that processes communicate via reading and writing in the
shared memory, 
the model of $k$-concurrency, in which at most $k$ processes are
active, can be shown equivalent, in terms of solving
distributed tasks, to the wait-free model in which processes also
have access to $k$-set consensus objects~\cite{GG11-univ}.
}

\remove{
The sets of models defined via adversaries and concurrency intersect.
Indeed, the wait-free model corresponds to the
adversary that consists of all non-empty process subsets and, also, the model
of $n$-concurrency.
In general, however, live sets and active processes are different
creatures, so we cannot represent a $k$-concurrent model
as an adversary, and vice versa.
The two models correspond to sets of runs that cannot always be related by containment.

Can we find a way to relate the \emph{computational power} of the
two model classes?
}

A conventional way to capture the power of a model is to determine its
\emph{task computability}, i.e., the
set of distributed tasks that can be solved in it.
For example, consider the \emph{$0$-resilient} adversary $\A_{0\textrm{-}\textit{res}}$
defined through a single live set $\{p_1,\ldots,p_n\}$: the
adversary says that no process is allowed to fail (by taking only
finitely many steps).
%
It is easy to see that the model is strong enough to
solve \emph{consensus}, and, thus, any task~\cite{Her91}.~\footnote{In
  the ``universal'' task of consensus,
every process has a private \emph{input} value, and is expected to
produce an \emph{output} value, so
that (validity)~every output is an input of some process, 
(agreement)~no two processes produce different output values, and (termination)~every process
taking sufficiently many steps returns.} 

%
%
%
%
%
%
In this paper, we propose a surprisingly simple characterization of
the task computability of a large class of adversarial models 
through \emph{agreement functions}.

An agreement function $\alpha$ maps subsets of processes $\{p_1,\ldots,p_n\}$ to
positive integers in $\{0,1,\ldots,n\}$.  
For each subset $P$,  $\alpha(P)$ determines, intuitively, the level of
\emph{set consensus} that processes in $P$ can reach when no other process is 
active, i.e., the smallest number of distinct input values they can
decide on.

For example, the agreement function of the wait-free 
shared-memory model is $\alpha_{wf}: P \mapsto |P|$ and 
the $t$-resilient model, where at most $t$ processes may fail or not
participate, has $\alpha_{t,\textit{res}}:\; P \mapsto \max(0,|P|-n+t+1)$.
%

The agreement function of an adversary $\A$ can be computed 
using the notion of \emph{set consensus power} of an adversary introduced
in~\cite{GK11}: $\alpha_{\A}(P)=\setcon(\A|_P)$.
Here $\A|_P$ is the \emph{restriction of
  $\A$ to $P$}, i.e., the adversary defined through the live sets of $\A$ that
are subsets of $P$.

To each agreement function $\alpha$, corresponding to an existing model, 
we associate a particular model, the \emph{$\alpha$-model}.
The $\alpha$-model is defined as the set of runs satisfying the
following property: the set $P$ of \emph{participating}
(taking at least one step) processes in a run is 
such that $\alpha(P)\geq 1$ and is such that at most $\alpha(P)-1$
processes take only finitely many steps in it.
An algorithm solves a task in the~$\alpha$-model if 
processes taking infinitely many steps produces an output.


We show that, for the class of \emph{{\fair}} adversaries, agreement functions ``tell it
all'' about task computability: a task is solvable in a {\fair}
adversarial model with agreement function $\alpha$ 
\emph{if and only if} it is solvable in the $\alpha$-model.
%
{\Fair} adversaries include notably the class of superset-closed~\cite{HR10,Kuz12} 
and the class of symmetric~\cite{Tau10} adversaries.
Intuitively, superset-closed adversaries do not anticipate failures of
processes: if $S\in \A$ and~$S\subseteq S'$, then~$S'\in\A$.   
Symmetric adversaries do not depend on processes identifiers: 
if~$S\in\A$, then for every set of processes $S'$ such that $|S'|=|S|$,
we have $S'\in\A$.   

A corollary of our result is a characterization of the $k$-concurrency model~\cite{GG09}.
Here we use the fact that the $k$-concurrency model is equivalent, with respect to task solvability, to
the \emph{$k$-obstruction-freedom}~\cite{GK11}, a symmetric adversary
consisting of live sets of sizes from $1$ to $k$.
Thus, the agreement function
$\alpha_{k\textit{-conc}}:\; P\mapsto \min(|P|,k)$ captures the
$k$-concurrent task
computability.
An alternative characterization of $k$-concurrency via a compact
\emph{affine} task was suggested in~\cite{GHKR16}. 

There are, however, models that are not captured by their agreement
functions.
We give an example of a \emph{non-{\fair}} adversary that solves
strictly more tasks than its \emph{$\alpha$-model}.
Characterizing the class of models that can be captured through their
agreement function is an intriguing open question.

\remove{The first problem to resolve here is whether there exist non-{\fair}
adversaries which are in this class. }  

The rest of the paper is organized as follows. 
Section~\ref{sec:model} gives model definitions.
In Section~\ref{sec:agr}, we formally define the notion of an agreement
function.
In Section~\ref{sec:adapt}, we present an $\alpha$-adaptive
set consensus algorithm, and in
Section~\ref{sec:univer}, we prove a few useful properties of
$\alpha$-models. 
%
%
In Section~\ref{sec:adv}, we present the class of {\fair} adversary,
show that superset-closed and symmetric adversaries are {\fair} and
that {\fair} adversaries are captured by their agreement functions.
In Section~\ref{sec:counter}, we give examples of models that are
\emph{not} captured by agreement functions.
Section~\ref{sec:related} reviews related work, and
Section~\ref{sec:conc} concludes the paper.

\section{Preliminaries}
\label{sec:model}



\myparagraph{Processes, runs, models.}
Let $\Pi$ be a system of $n$ asynchronous processes, $p_1,\ldots,p_n$
that communicate via a shared atomic-snapshot memory~\cite{AADGMS93}. 
The atomic-snapshot (AS) memory is represented as a vector of $n$ shared
variables, where each process is associated with a distinct position
in this vector, and exports two operations: \emph{update} and
\emph{snapshot}. 
An \emph{update} operation performed by $p_i$
replaces position $i$ with a new value and a
\emph{snapshot} operation returns the current state of the vector.  

We assume that processes run the
\emph{full-information} protocol: 
the first value each process writes is its \emph{input value}.
A process then alternates between taking snapshots of the memory 
and writing back the result of its latest snapshot.
A \emph{run} is thus a sequence
of process identifiers stipulating the order in which the processes
take operations: each odd appearance of $i$ 
corresponds to an \emph{update} and each even appearance corresponds
to a \emph{snapshot}. 
A \emph{model} is a set of runs. 

\myparagraph{Failures and participation.}
A process that takes only finitely many steps of the
full-information protocol in a given run is called \emph{faulty}, otherwise it is called
\emph{correct}. A process that took at least one step in a given run
is called \emph{participating} in it.
The set of participating processes in a given run is called its
\emph{participating set}. Note that, since every process writes its input value in its first step, the inputs of
participating processes are eventually known to every process that
takes sufficiently many steps.

\remove{
\myparagraph{Adversaries.}
A process takes only finitely many steps in a given run is called \emph{faulty}, otherwise it is called
\emph{correct}. It is convenient to model patterns in which process failures can occur using the formalism of
\emph{adversaries}~\cite{DFGT11}. An adversary $\A$ is defined as a set of possible correct process subsets.

In this paper, we assume that adversaries are \emph{superset-closed}~\cite{Kuz12}: each superset of a set of an
element of $\A$ also an element of $\A$. Superset-closed adversaries provide an interesting non-uniform
generalization of the classical $t$-resilience condition~\cite{HR10}: the \emph{$t$-resilient} adversary in a
system of $n$ processes consists of all sets of $n-t$ or more processes.

Formally, an adversary $\A$ is a set of subsets of $\Pi$, called \emph{live sets}, $\A\subseteq 2^{\Pi}$,
satisfying $\forall S\in \A$, $\forall S'\subseteq \Pi$, $S\subseteq S'$: $S'\in\A$.

We say that an execution is \emph{$\A$-compliant} if the set of processes that are correct in that execution
belongs to $\A$. Thus, an adversary $\A$ describes a model consisting of $\A$-compliant executions.
}

\myparagraph{Tasks.}
In this paper, we focus on 
distributed \emph{tasks}~\cite{HS99}.
A process invokes a task with an input value and the task returns an output value, so that the inputs and the
outputs across the processes which invoked the task respect the task
specification.
Formally, a \emph{task} is defined through a set $\I$ of input vectors (one input value for each process),
a set $\mathcal{O}$ of output vectors (one output value for each process), and a total relation $\Delta:\I\mapsto 2^{\mathcal{O}}$
associating to each input vector a set of valid output vectors. An input $\bot$ denote a \emph{not
participating} process and an output value $\bot$ denote an
\emph{undecided} process. Check~\cite{HKR14} for more details.

In the task of \emph{$k$-set consensus}, input values are in a set of values $V$ ($|V|\geq k+1$), output values
are in $V$, and for each input vector $I$ and output vector $O$, $(I,O) \in\Delta$ if the set of non-$\bot$
values in $O$ is a subset of values in $I$ of size at most $k$.
The special case of $1$-set consensus is called \emph{consensus}~\cite{FLP85}. 

\myparagraph{Solving a task.} We say that an algorithm $A$ solves a task $T=(\I,\mathcal{O},\Delta)$ in a model $M$ if $A$ ensures that (1) in every run in which processes start with an input vector $I\in\I$, all decided values form a vector $O\in\mathcal{O}$ such that $(I,O)\in\Delta$, and (2) if the run is in $M$, then every correct process decides.

This gives rise to the notion of task solvability, i.e., a task $T$ is solvable in a model $M$ if and only if there exists an algorithm $A$ which solves $T$ in $M$.

\myparagraph{BGG simulation.}
The principal technical tool in this paper is a simulation technique
that we call the \emph{BGG simulation}, after Borowski, Gafni,
Guerraoui, collecting ideas presented in~\cite{BG93a,Gaf09-EBG,GG09,GG11-univ}. 
The technique allows a system of~$n$ processes that communicate via
read-write shared memory and $k$-set consensus objects to
\emph{simulate} a $k$-process system running an arbitrary read-write
algorithm.

In particular, we can use this technique to run an extended BG simulation~\cite{Gaf09-EBG} on top
of these $k$ simulated processes, which gives a simulation of an
arbitrary $k$-concurrent algorithm.
An important feature of the simulation is that it adapts to the
number of currently active simulated processes $a$: if it goes below
$k$ (after some simulated processes complete their computations), the
number of used simulators also becomes $a$.    
We refer to~\cite{GHKR16} for a detailed description of this
simulation algorithm.

\remove{
We show in Sections~\ref{sec:adapt}-\ref{sec:adv} how to extend
BGG simulation to \emph{$\alpha$-models}, i.e., models that allow
us to solve $\alpha(P)$-set consensus, whenever the participating set
is~$P$. 
} 




\remove{
An execution of the processes $p_1,\ldots p_n$ can be \emph{simulated} by a set of \emph{simulators}
$s_1,\ldots,s_{\ell}$ that mimic the steps of the full-information protocol in a \emph{consistent} way: for
every execution $E_s$, there exists an execution $E$ of the full-information protocol on $p_1,\ldots,p_n$ such
that the sequence of simulated snapshots for every process $p_i$ in $E_s$ have also been observed by $p_i$ in
$E$.
}

\ignore{
The \emph{BG simulation} technique~\cite{BG93b,BGLR01} allows $k+1$ processes $s_1,\ldots,s_{k+1}$, called
\emph{simulators}, to wait-free simulate a \emph{$k$-resilient} execution of any protocol $\A$ on $m$ processes
$p_1,\ldots,p_m$ ($m>k$). The simulation guarantees that each simulated step of every process $p_j$ is either
agreed upon by all simulators, or one less simulator participates further in the simulation for each step which
is not agreed on. 
The main outcome of this result is that the set of colorless tasks
solvable $k$-resiliently by $m$ processes coincides with the one
solvable wait-free by $k+1$ processes. 


The core of the technique is the \emph{BG-agreement} protocol. 
Indeed, the BG-agreement protocol is used by the simulators to agree on the simulated
state of a process: it ensure that every decided value was previously proposed, and
no two different values are decided.
But if one of the simulators slows down while executing BG-agreement, the protocol's execution at other correct
simulators may ``block'' until the slow simulator finishes the protocol.
If the slow simulator is faulty, no other simulator is guaranteed to
decide.
}

\remove{
If the simulation tries to promote $m>k$ simulated processes in a fair (e.g., round-robin) way, then, as long as
there is a live simulator, at least $m-k$ simulated processes perform infinitely many steps of $\A$ in the
simulated execution.
}

\ignore{
The technique was later applied to ``generic'' (not necessarily
colorless) tasks to derive the \emph{extended BG
simulation}~\cite{Gaf09-EBG}. Here the BG-agreement protocol is used to implement an Extended Agreement (EA),
which is also safe but not necessarily live. As BG-agreement, it may block if some process has slowed down in
the middle of executing it. Additionally, the EA protocol exports an \emph{abort} operation that, when applied
to a blocked EA instance, re-initializes it so that it can move forward until an output is computed or a process
makes it block again.
}

\remove{
Briefly, the implementation of EA is as follows. To decide, every process goes through a sequence of
BG-agreement protocols alternated with instances of commit-adopt (CA) protocols:
$\textit{EA}_1,\textit{CA}_1,\textit{EA}_2,\textit{CA}_2,\ldots$. Here each $\textit{EA}_i$ is BG-agreement
equipped with an additional \textit{Abort} boolean register, initially \textit{false}. When a process reaches
the ``waiting'' phase of the protocol, where it waits until every participating process finishes the protocol,
it also additionally checks if the \textit{Abort} register is set to \textit{true}.
The process invokes the first BG-agreement protocol ($\textit{EA}_1$) using its input value. Every subsequent
$\textit{CA}_i$ (resp., $\textit{EA}_i$) is accessed with the value received from the preceding
$\textit{EA}_{i-1}$ (resp., $\textit{CA}_{i-1}$).
Here if the preceding $\textit{EA}_{i-1}$ was aborted, then the process uses its own decision estimate as an
input for $\textit{CA}_{i}$. If the preceding $\textit{CA}_{i-1}$ adopted a value, then this value is used as an
input for $\textit{EA}_{i}$. The process decides on the first value to be committed in a commit-adopt protocol.

The properties of commit-adopt and BG-agreement ensure that the EA protocol is safe in the sense that no two
different values can be decided and every decided value is an input of some participating process. However, it
might not be live in case when either some of its agreement protocol remains forever blocked or protocols are
aborted. But, as we will see, if $\ell$ simulators, one of which is correct, are involved in simulating a
protocol on $\ell$ processes, then we can use the EA protocol so that at least one of the simulated processes
advances under the condition that at least one of the simulators is correct.
}       

\ignore{
The \emph{$k$-state-machine simulation} was introduced
in~\cite{GG11-univ} as a generalization of state machine
replication~\cite{Sch90,Her91}. 
The processes \emph{issue} $k$-vectors of commands that they
seek to \emph{execute} on their local copies of the $k$ state
machines: a command issued at entry $j$ of a vector is
to be executed on machine $sm[j]$.
Informally, the construction proposed in~\cite{GG11-univ} ensures that the local
copies of state machine $\textit{sm}[i]$ ($i=1,\ldots,k$) progress in
the same way at all processes: prefixes of the same sequence of
proposed commands are executed on $\textit{sm}[i]$ by every process.
Furthermore, there is at least one machine $\textit{sm}[i]$ that
\emph{makes progress}, i.e., executes infinitely many proposed
commands at every correct process.    
The $k$-state-machine abstraction can be implemented by any AS system in which
$k$-set consensus can be solved.
}

\remove{
The principle of the construction is to use $k$-set-agreement to solve $k$-simultaneous consensus. Each process
is provided with an index in $\{1,\dots,k\}$ and a decision value. All processes with the same index agree on
the proposal returned, i.e., reach consensus among themselves. Processes uses this value to participate on a
commit-adopt associated with their associated entry. Only after they participate in all other entry's
commit-adopt with their initial proposal. This order is important as it ensures one commit-adopt succeed, all
start by participating to commit-adopt with agreed values thus the first one to terminated cannot have been
perturbed by a process with a different proposal.

One important feature not visible according to the specification is that when $l$ processes participates, with
$l<k$, the $k$-set-agreement is an $l$-set-agreement in practice and the corresponding $k$-simultaneous
consensus algorithm only returns entry values lower or equal to $l$. This property is required later in the
simulation.
}

\ignore{
%
Furthermore, the implemented $k$-state-machines plus AS can be used to
simulate any $k$-process AS shared-memory system.
In the construction, the commands submitted to the state machines are
actually the outcomes of snapshot operations of the full-information
protocol, evaluated from the snapshots of the memory taken by the
simulators. Each time a new command is agreed upon for a state
machine, its outcome is registered by the simulators in the shared
memory, so that future snapshot would necessary contain it, which
simulates update operations. 
This way, the simulated $k$ state machines indeed witness a sequence
of local states compatible with an AS run. 
}

\remove{

The construction is quite simple, process propose as command for the state machines results of the read
consisting of the last write they see committed for other state machines. Then they write to shared memory their
views of the $k$ state-machines progress. Then they take a snapshot of the memory to update their views before
computing the new command to submit.

The last known committed writes may not be up-to-date enough according to the original algorithm, processes have
a state that may be missing the last committed round and if two processes manage to first commit views to two
different state machines while being both one write late on the other, it would not fulfil the shared memory
sequential specification.

To see that this simple construction indeed implements a $k$-process shared memory system, validity directly
arises from the shared memory mechanism. We can associate as linearisation point for a simulated process read
the linearisation point for the last snapshot made by any of the processes that proposed the corresponding
command. For the write, it can be linearised at the linearisation point of the first process that wrote the
committed command. Finally, the $k$-state machine ordering and validity properties ensure that processes can
determine the most recent command executed to a state machine from the shared memory snapshot.
}

\section{Agreement functions}
\label{sec:agr}

\begin{definition}[Agreement function]
The \emph{agreement function} of a model $M$ is 
a function $\alpha: 2^{\Pi} \to \{0,\ldots,n\}$, such that 
for each $P\in 2^{\Pi}$, in the set of runs of $M$ in which no process
in $\Pi\setminus P$ participates,
iterative $\alpha(P)$-set consensus can be solved, 
but $(\alpha(P)-1)$-set consensus cannot.
By convention, if $M$ contains no (infinite) runs with participating set
$P$, then  $\alpha(P)=0$.
\end{definition}
 
\noindent
Intuitively, for each $P$, we consider a model consisting of runs of $M$ in which only
processes in $P$ participate and determine the best level of set
consensus that can be reached in this model,
with $0$ corresponding to a model that consists of \emph{finite} runs only.

Note the agreement function $\alpha$ of a model $M$ is \emph{monotonic}:
$P\subseteq P'$ $\Rightarrow$ $\alpha(P)\leq \alpha(P')\leq|P'|$. 
Indeed, the set of runs of $M$ where 
the processes in $\Pi\setminus P$ do not take any step 
is a subset of the set of runs of $M$ where 
the processes in $\Pi\setminus P'$ do not take any step. 
Moreover, $|P'|$-set consensus is trivially solvable in any model 
by making processes return their own proposal directly.
In this paper, we only consider monotonic functions~$\alpha$. 

\begin{definition}[$\alpha$-model] 
Given a monotonic agreement function $\alpha$, the \emph{$\alpha$-model} is the set of runs in which, 
the participating set $P$ satisfies: 
(1)~$\alpha(P)\geq 1$; and, (2)~at most $\alpha(P)-1$ participating
processes take only finitely many
steps.
\end{definition}

\noindent
We say that a model is \emph{characterized 
by its agreement function} $\alpha$ if and only if 
it solves the same set of task as the $\alpha$-model.

\begin{definition}[$\alpha$-adaptive set consensus]
  The $\alpha$-adaptive set consensus task is an agreement task satisfying the
  {\bf validity} and {\bf termination} properties of consensus 
 and the {\bf $\alpha$-agreement} property:
if at some time $\tau$, $k$ distinct values have been returned, 
then the current participating set $P_\tau$ 
is such that $\alpha(P_\tau)\geq k$.
\end{definition}


\section{Adaptive set consensus}
\label{sec:adapt}

We can easily show that
any model with agreement function $\alpha$ can solve the 
$\alpha$-adaptive set consensus task, i.e., to achieve the best level
of set consensus without this an priori knowledge of the set of processes
that are allowed to participate.

\begin{algorithm}
 \caption{Adaptive set consensus for process $p_i$.\label{Alg:AdaptiveAgreement}}
\SetKwRepeat{Repeat}{Repeat}{Until}%
\textbf{Shared variables}: $R[1,\dots,n] \leftarrow (\bot,\bot)$\;\label{Alg:l:MemoryInitState}
$\mathbf{Local\ variables}$: $\mathit{parts},P \in 2^\Pi,v \leftarrow \mathit{Proposal}, k\in \mathbb{N},r[1,\dots,n] \leftarrow (\bot,\bot)$\;\label{Alg:l:Init}

\vspace{1em}

$R[i].\mathit{update}(v,0)$\;\label{Alg:l:Input}
$r = R.\mathit{snapshot}()$\;\label{Alg:l:InitPart1}
$P =$ set of processes $p_j$ such that $r[j]\neq(\bot,\bot)$\;\label{Alg:l:InitPart2}

\Repeat{$\mathit{parts}=P$\label{Alg:l:ConstantPart}}{\label{Alg:l:loop-begin}
	$ \mathit{parts} := P$\;
	$k :=$ be the greatest integer such that $(-,k)$ is in $r$\;\label{Alg:l:AdoptProp1}
	$v :=$ be any value such that $(v,k)$ is in $r$\;\label{Alg:l:AdoptProp2}
	$v := \alpha(\mathit{parts})$-$\mathit{agreement}(v)$\;\label{Alg:l:Agrrement}
	$R[i].\mathit{update}(v,|\mathit{parts}|)$\;\label{Alg:l:LockProp}

	\vspace{1em}

	$r = R.\mathit{snapshot}()$\;\label{Alg:l:UpdatePart1}
	$P =$ set of processes $p_j$ such that $r[j] \neq (\bot,\bot)$\;\label{Alg:l:UpdatePart2}
}\label{Alg:l:loop-end}
\Return $v$\;\label{Alg:l:Decide}

\end{algorithm}

Let $M$ be a model and let $\alpha$ be its agreement function.
Recall that, by definition, assuming that subsets of $P$ participate,
there exists an algorithm that solves $\alpha(P)$-set consensus, 
let $\alpha(P)$-$\mathit{agreement}$ be such an algorithm.

We describe now an \emph{$\alpha$-adaptive} set consensus
algorithm providing the ``best'' level of consensus available in every
participating set, without prior knowledge of who may participate.
The algorithm adaptively ensures that if the participating set is $P$, then at most
$\alpha(P)$ distinct input values can be decided.

The algorithm is presented in
Algorithm~\ref{Alg:AdaptiveAgreement}. The idea is the following:
every process $p$ writes its input in shared memory
(line~\ref{Alg:l:Input}), and then takes a snapshot to get the current set
$P$ of participating processes (lines~\ref{Alg:l:InitPart1}--\ref{Alg:l:InitPart2}). 

Then processes adopt a value ``locked'' by a process associated to the largest participation 
(lines~\ref{Alg:l:AdoptProp1}--\ref{Alg:l:AdoptProp2}). It uses this new value as proposal for
a simulated agreement solving $\alpha(parts)$ (feasible as it is accessed only by processes in $parts$ and   $\alpha$ only increases with participation). The decision value obtained is then ``locked'' in memory by writing the value together with the current participation estimation size, $|parts|$, to the shared
memory (line~\ref{Alg:l:LockProp}). If after locking the value, the updated
participating set $P$ (lines~\ref{Alg:l:UpdatePart1}--\ref{Alg:l:UpdatePart2}), has not
changed (line~\ref{Alg:l:ConstantPart}), the process returns its current decision estimate, $v$ (line~\ref{Alg:l:Decide}). Otherwise, if participation has changed, then the same construction is applied again and again until it has been executed with an observed stable participation (lines~\ref{Alg:l:loop-begin}--\ref{Alg:l:loop-end}).

\begin{theorem}
\label{thm:AdaptiveAgreement}
In any run with a participating set $P$, Algorithm~\ref{Alg:AdaptiveAgreement} satisfies the following
properties:
\begin{itemize}
\item \emph{Termination:} All \emph{correct} processes eventually decide.
\item \emph{$\alpha$-Agreement:} if at some time $\tau$, $k$ distinct values have been returned, 
then the current participating set $P_\tau$ 
is such that $\alpha(P_\tau)\geq k$.
\item \emph{Validity:} Each decided value has been proposed by some process.
\end{itemize}
\end{theorem}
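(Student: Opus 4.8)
\noindent
The plan is to treat the three properties separately, with \emph{Validity} and \emph{Termination} being comparatively routine and \emph{$\alpha$-Agreement} carrying the real weight. Throughout I would exploit one structural fact about the atomic-snapshot memory: the participating sets observed by the snapshots on line~\ref{Alg:l:InitPart1} and line~\ref{Alg:l:UpdatePart1}, across all processes and all iterations, are \emph{totally ordered by inclusion} (snapshots are linearizable and participation only grows, so the non-$\bot$ sets returned by any two snapshots are comparable). For \emph{Validity}, I would argue by induction on the order of writes to $R$ that every stored value is a proposal of some process: level-$0$ writes are inputs by line~\ref{Alg:l:Input}, the adoption on lines~\ref{Alg:l:AdoptProp1}--\ref{Alg:l:AdoptProp2} copies an existing stored value, and each $\alpha(\mathit{parts})$-$\mathit{agreement}$ returns one of its inputs by the validity of set consensus.

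For \emph{$\alpha$-Agreement}, fix a time $\tau$ at which the set $R$ of distinct returned values has size $k$, and let $P_\tau$ be the current participating set. The key is to locate the \emph{smallest} participation level at which any value is returned. A process returns $v$ only after writing $(v,|\mathit{parts}|)$ and then observing the \emph{same} set $\mathit{parts}$ again (line~\ref{Alg:l:ConstantPart}); call $\mathit{parts}=S$ its \emph{stable set} and $|S|$ its \emph{level}. By the total-order fact, all stable sets of a given size coincide, so there is a unique smallest stable set $S_{\min}$, of level $\ell_{\min}$, with $S_{\min}\subseteq P_\tau$. I would then prove \emph{forced visibility}: since some process locked each minimal returned value and \emph{then} saw exactly $S_{\min}$, any snapshot observing a set strictly larger than $S_{\min}$ is linearized after those locks and therefore reports a maximum level $\ge \ell_{\min}$. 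Consequently every process that writes at a level $>\ell_{\min}$ adopts, on lines~\ref{Alg:l:AdoptProp1}--\ref{Alg:l:AdoptProp2}, a value of level $\ge \ell_{\min}$.

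Let $B$ be the set of values output by the \emph{single} $\alpha(S_{\min})$-$\mathit{agreement}$ instance (the unique instance with $\mathit{parts}=S_{\min}$); by $\alpha(S_{\min})$-set consensus, $|B|\le \alpha(S_{\min})$. I would now show by strong induction on write order that \emph{every value written at level $\ge \ell_{\min}$ lies in $B$}: a level-$\ell_{\min}$ write is an output of the $S_{\min}$-instance, hence in $B$; a write at a higher level is the output of an agreement all of whose participants share the same $\mathit{parts}$ and adopt their inputs from strictly earlier level-$\ge\ell_{\min}$ writes, which lie in $B$ by induction, so the output lies in $B$ by set-consensus validity. Since every returned value is written at level $\ge \ell_{\min}$, this gives $R\subseteq B$ and hence $k=|R|\le \alpha(S_{\min})\le \alpha(P_\tau)$ by monotonicity. (The bound being governed by the \emph{smallest} returned level is exactly what one expects: an early, low-participation return forces all later processes to adopt it, collapsing the attainable diversity.)

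Finally, for \emph{Termination}, the participating set a correct process reads is monotone and bounded by $n$, and a correct process takes infinitely many snapshots, so it eventually observes the final participating set $\hat P$; once that happens the loop guard $\mathit{parts}=P$ is satisfied and the process returns. The one point that needs care --- and which I expect to be the main obstacle --- is arguing that each $\alpha(\mathit{parts})$-$\mathit{agreement}$ a correct process enters actually terminates. Here I would use that the instance for $\mathit{parts}=S$ is accessed only by processes that observed exactly $S$, hence only by processes in $S$, so its execution is an execution of $M$ in which no process outside $S$ participates; since $\alpha(S)$-$\mathit{agreement}$ is by definition a correct set-consensus algorithm for precisely that restricted model, every correct participant decides. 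Composing this agreement-level termination with the stabilization argument closes the proof.
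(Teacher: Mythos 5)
Your proof is correct and follows essentially the same route as the paper's: the same three-way decomposition, the same validity and monotone-participation termination arguments, and for $\alpha$-agreement the same key idea of fixing the smallest return level, showing via the ordering of snapshots that every later writer must adopt a value originating from the unique minimal-level agreement instance, and concluding with $k\le\alpha(S_{\min})\le\alpha(P_\tau)$ by monotonicity. The differences are only presentational: you state explicitly the total ordering of observed participating sets and the termination of the nested $\alpha(\mathit{parts})$-agreement calls, both of which the paper uses implicitly (the latter covered only by its parenthetical remark that each instance is accessed only by processes in $\mathit{parts}$).
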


\begin{proof}
Let us show that Algorithm~\ref{Alg:AdaptiveAgreement} satisfies the following
properties:
\begin{itemize}
\item  \textbf{Validity:} 
Processes can only decide on their current estimated decision value~$v$ 
(line~\ref{Alg:l:Decide}). 
This value~$v$ is first initialized to their own input proposal (line~\ref{Alg:l:MemoryInitState}), 
and it can then only be replaced by adopting another process's current, 
initialized, estimated decision value. Adaopting this value can either
be done directly (line~\ref{Alg:l:AdoptProp2}), 
or through an agreement protocol (line~\ref{Alg:l:Agrrement})
satisfying the same \emph{validity} property. 
\item \textbf{Termination:} 
Assume that a correct process never decides. 
Then it must be blocked executing the \emph{while loop} indefinitely often 
(at lines~\ref{Alg:l:loop-begin}--\ref{Alg:l:loop-end}). 
So the participation observed must have changed at each iteration 
(line~\ref{Alg:l:ConstantPart}). But, as participation can only increase, 
it must have been strictly increased infinitely often 
-- a contradiction with the finiteness of the system size.
\item \textbf{$\alpha$-Agreement:}
%
We say that a process \emph{returns at level} $t$ if 
it exited the while loop (line~\ref{Alg:l:ConstantPart}) with
a last participation observed of size $t$. Let $l$ be the 
smallest level at which a process \emph{returns}, and let 
$O_l$ the set of values ever written to $R$ at \emph{level} $l$
, i.e., values $v$ such that $(v,l)$ ever appears in
$R$. We shall show that for all $l'>l$, $O_{l'}\subseteq O_l$.

Indeed, let $q$ be the first process to write a value $(v',l')$ (in
line~\ref{Alg:l:LockProp}), such that~$l'>l$, in $R$. Thus, the
immediately preceding snapshot, taken before this write in
lines~\ref{Alg:l:InitPart1} or~\ref{Alg:l:UpdatePart1}, witnessed a
participating set of size $l'$. Hence, the snapshot of $q$ succeeds
the last snapshot (of size $l<l'$) taken by any process $p$ that
\emph{returned at level} $l$. But immediately before taking this last
snapshot, every such process has written $(v,l)$ in R
(line~\ref{Alg:l:LockProp}) for some $v\in O_l$. Therefore $q$ must see 
$(v,l)$ in its snapshot of size $l'$ and, since, by assumption, the snapshot
contains no values written at levels higher than $l$, $q$ must have adopted
some value written at level $l$, i.e., $v'\in O_l$. 
Inductively, we can derive that any subsequent process will have to adopt a value in $O_l$.

We have shown that every returned value must appear in $O_l$, where $l$ is the
smallest \emph{level} at which some process exits the while loop (line~\ref{Alg:l:ConstantPart}).

Participation can only grow due to snapshots inclusion property, 
thus there is a single participating set
of size $l$, $P_l$, which can be observed in a given run.
 A value in $O_l$ has been adopted as the decision returned by
the $\power{P_l}$-agreement, that can return at most $\power{P_l}$
values as it was only accessed by processes in~$P_l$. 
Therefore the set of returned decisions, as included in $O_l$, 
is smaller or equal to~$\power{P_l}$. Last, it easy to see that
for any process returning at a time $\tau$, the participation $P_\tau$ 
at time $\tau$ is such that $P_l\subseteq P_\tau$ and thus that $\power{P_l}\leq\power{P_\tau}$. 
Therefore, the number of distinct returned decisions returned 
at any time $\tau$ is smaller or equal to $\power{P_\tau}$.
\end{itemize}
\end{proof}

This adaptive agreement is simple but central for all our model simulations. All reductions are constructed with a common simulation structure. Processes write their input, then use their access to shared memory and their ability to solve this adaptive agreement to simulate an adaptive BGG simulation where the number of active BG simulators is equal at any time $\tau$ to at most $\alpha_M(P_\tau)$, with $P_\tau$ the participating set at time $\tau$, with at least one of these BG simulators taking infinitely many steps.

\section{Properties of the $\alpha$-model}
\label{sec:univer}

We now relate task solvability in the $\alpha$-model and in $M$.   
More precisely, we show that (1) the agreement function of
the $\alpha$-model is $\alpha$ and
(2) any task $T$ solvable in the $\alpha$-model is also solvable in every
model with agreement function $\alpha$. 

\begin{theorem}
\label{lem:activeResilency}
The agreement function of the $\alpha$-model is $\alpha$.
\end{theorem}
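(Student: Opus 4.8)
The plan is to prove the two inequalities $\alpha_{\alpha\text{-model}}(P) \geq \alpha(P)$ and $\alpha_{\alpha\text{-model}}(P) \leq \alpha(P)$ separately, where $\alpha_{\alpha\text{-model}}$ denotes the agreement function of the $\alpha$-model. Fix a participating set $P$ and restrict attention to runs of the $\alpha$-model in which only processes in $P$ take steps.

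For the lower bound $\alpha_{\alpha\text{-model}}(P) \geq \alpha(P)$, I must exhibit an algorithm solving iterative $\alpha(P)$-set consensus in this restricted model. The key observation is that in any run of the $\alpha$-model with participating set exactly $P$, the definition guarantees that at most $\alpha(P)-1$ processes are faulty, hence at least one process in $P$ is correct, and more precisely no more than $\alpha(P)-1$ processes fail. This is exactly the failure pattern that makes $\alpha(P)$-set consensus solvable: with at least $|P| - (\alpha(P)-1)$ correct processes, one can run the standard read-write $\alpha(P)$-set consensus protocol (e.g. based on atomic snapshots) in which each correct process eventually decides and at most $\alpha(P)$ distinct values survive. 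Since $\alpha$ is monotonic and participating sets only grow, I would appeal to Theorem~\ref{thm:AdaptiveAgreement}: running the $\alpha$-adaptive set consensus algorithm in these runs yields termination for all correct processes and, by $\alpha$-agreement, at most $\alpha(P_\tau) \leq \alpha(P)$ distinct decisions, giving $\alpha(P)$-set consensus. Iterability follows since each instance is self-contained.

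For the upper bound $\alpha_{\alpha\text{-model}}(P) \leq \alpha(P)$, I must show that $(\alpha(P)-1)$-set consensus is \emph{not} solvable in the restricted $\alpha$-model. The natural route is a partition/indistinguishability argument: exhibit a run of the $\alpha$-model with participating set $P$ in which $\alpha(P)$ processes are effectively isolated and cannot agree to fewer than $\alpha(P)$ values. Concretely, the $\alpha$-model admits runs where $\alpha(P)-1$ processes are faulty (take only finitely many steps), so one can construct a run in which $\alpha(P)$ distinct ``surviving'' groups each run in isolation long enough to commit to their own input before seeing the others. By the standard lower bound for set consensus (a topological or combinatorial argument showing $k$-set consensus is impossible when $k+1$ processes can run solo), any algorithm claiming to solve $(\alpha(P)-1)$-set consensus can be forced to produce $\alpha(P)$ distinct outputs, a contradiction.

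The main obstacle is the upper bound, and specifically justifying that the $\alpha$-model genuinely admits the ``bad'' runs needed for the impossibility: I must verify that a run in which $\alpha(P)-1$ processes fail while the rest proceed is indeed \emph{in} the $\alpha$-model for participating set $P$, which follows directly from condition~(2) of the $\alpha$-model definition, and then connect this failure freedom to the classical set-consensus lower bound. The subtlety is that the $\alpha$-model constrains only the \emph{number} of faulty processes as a function of the participating set, not their identities, so the impossibility of $(\alpha(P)-1)$-set consensus reduces cleanly to the wait-free-style impossibility of $k$-set consensus among $k+1$ processes with up to $k$ failures. I would invoke the known result that $k$-set consensus is unsolvable in a model permitting $k$ failures among the participants to close the argument.
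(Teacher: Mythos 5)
Your upper bound is essentially the paper's: both reduce to the known impossibility of solving $(\alpha(P)-1)$-set consensus when $\alpha(P)-1$ processes may fail (the paper simply cites this result rather than re-deriving it, and your partition sketch would not suffice on its own anyway, but you do fall back on the citation, so that direction is fine). The genuine gap is in your lower bound. The set of runs over which you must solve $\alpha(P)$-set consensus is \emph{not} the classical $(\alpha(P)-1)$-resilient model on $P$: by the definition of the agreement function, it consists of all runs of the $\alpha$-model in which no process \emph{outside} $P$ participates, so the participating set can be any $P'\subseteq P$ with $\alpha(P')\geq 1$, and only faulty \emph{participants} are bounded (by $\alpha(P')-1$); the processes of $P\setminus P'$ never show up and are not charged to any fault budget. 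Hence the ``standard read-write $\alpha(P)$-set consensus protocol'' you invoke does not exist off the shelf: a threshold protocol that waits for $|P|-(\alpha(P)-1)$ written values, or one that waits for one of $\alpha(P)$ designated processes, blocks forever in runs where $P'$ is small (e.g., $\alpha(P')=1$, all participants correct, but the quorum or the designated processes never arrive). You implicitly conflate ``participating set exactly $P$'' with the actual run set, ``participation contained in $P$,'' and your argument only covers the former.

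Your fallback, appealing to Theorem~\ref{thm:AdaptiveAgreement}, is circular: Algorithm~\ref{Alg:AdaptiveAgreement} requires as a subroutine an $\alpha(Q)$-agreement protocol for every participating set $Q$, and such subroutines are guaranteed to exist only in a model whose agreement function is already known to be (at least) $\alpha$ --- which is precisely what is being proved about the $\alpha$-model. The paper closes the gap with a construction that is oblivious to which subset of $P$ shows up: every participant cycles in round-robin fashion through $\alpha(P)$ safe-agreement instances (the BG building block), proposing its input and moving to the next instance whenever the current one is blocked. Since at most $\alpha(P')-1\leq\alpha(P)-1$ participants fail and each faulty process can block at most one instance, some instance terminates, so every correct participant decides; each instance decides at most one value, so at most $\alpha(P)$ values are returned. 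This participation-oblivious use of safe agreement is the idea missing from your proof.
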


\begin{proof}
Take $P$ such that $\alpha(P)>1$ and consider the set of runs of the
$\alpha$-model in which no process in $\Pi\setminus P$ participates
and, thus, according to the monotonicity property, at most $\alpha(P)-1$ processes are faulty.
To solve $\alpha(P)$-set consensus, we use the \emph{safe-agreement} protocol~\cite{BG93b}, 
the crucial element of BG simulation.
Safe agreement solves consensus if every process that participates in
it takes enough steps. 
The failure of a process then may \emph{block} the safe-agreement protocol. 
In our case as at most $\alpha(P)-1$ processes in $P$ can fail, so we
can simply run $\alpha(P)$ safe agreement protocols: every process
goes through the protocols one by one using its input as a proposed
value, if the protocol blocks, it proceeds to the next one in the
round-robin manner.
The first protocol that returns gives the output value.
Since at most $\alpha(P)-1$ processes are faulty, at least one safe
agreement eventually terminates, and there are at most $\alpha(P)$
distinct outputs.
To see that $(\alpha(P)-1)$ cannot be solved in this set of runs,
recall that one cannot solve $(\alpha(P)-1)$-set consensus
$(\alpha(P)-1)$-resiliently~\cite{BG93b,HS99,SZ00}. 
\end{proof}

\noindent The following result is instrumental in our
characterizations of \emph{fair} adversaries:

\begin{theorem}\label{thm:adaptiveAbstract}
For any task $T$ solvable in an $\alpha$-model, $T$ is solvable in any read-write shared 
memory model which solves the $\alpha$-adaptive set consensus task.
\end{theorem}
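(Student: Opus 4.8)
The plan is to take an algorithm $A$ that solves $T$ in the $\alpha$-model and \emph{compile} it into an algorithm $A'$ that solves $T$ in the given read-write model $M'$, by having the processes of $M'$ run an adaptive BGG simulation of $A$. The only non-trivial object such a simulation needs is an agreement primitive bounding the number of concurrently promoted simulated processes; here we plug in the $\alpha$-adaptive set consensus task, which $M'$ solves by assumption. Concretely, each process of $M'$ first writes its input and then repeatedly drives the simulation exactly as sketched in the remark following Theorem~\ref{thm:AdaptiveAgreement}: at every time $\tau$ the number of \emph{active} BG simulators is kept to at most $\alpha(P_\tau)$, where $P_\tau$ is the current participating set, and at least one of these simulators takes infinitely many steps. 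The simulated processes carry the inputs of the real processes and execute $A$.

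The first thing I would verify is that the produced simulated execution is a genuine run of the $\alpha$-model. Since every real participant promotes its own simulated process, the simulated participating set coincides with the real participating set $P$; monotonicity together with the mere fact that $M'$ can solve $\alpha$-adaptive set consensus (whose $\alpha$-agreement property forbids returning any value when $\alpha(P)=0$) gives $\alpha(P)\geq 1$, so the first condition of the $\alpha$-model holds. For the second condition I would invoke the standard BG bound: with at most $\alpha(P)$ active simulators, one of which is correct, at most $\alpha(P)-1$ simulated processes can remain permanently blocked on a safe-agreement abandoned by a crashed simulator. Hence at most $\alpha(P)-1$ simulated processes are faulty, which is precisely what the $\alpha$-model permits. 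Crucially, this bound is independent of how many \emph{real} processes crash: a single correct real process suffices to keep all but $\alpha(P)-1$ simulated processes advancing.

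Safety and liveness then follow along fairly standard lines. For safety, note that $A$, since it solves $T$, produces valid outputs in \emph{every} run, not only in $\alpha$-model runs; as the BGG simulation only ever exhibits legal runs of $A$, the values decided by $A'$ form a valid output vector of $T$ regardless of whether the current run of $A'$ belongs to $M'$. For liveness, in any run of $A'$ lying in $M'$ the $\alpha$-adaptive set consensus instances terminate, so the simulation keeps making progress and the simulated execution is an \emph{infinite} $\alpha$-model run; by the liveness guarantee of $A$ in the $\alpha$-model, the correct simulated processes decide, and each correct process of $M'$ returns the decision of its own simulated process.

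I expect the main obstacle to be precisely this last step---ensuring that \emph{every} correct process of $M'$, and not merely $\alpha(P)$ of them, extracts a decision for \emph{its own} simulated process. Plain BG simulation only guarantees that all but at most $\alpha(P)-1$ simulated processes decide, without controlling \emph{which} ones, so a correct real process whose simulated counterpart happens to be among the blocked ones would fail to terminate. This is exactly what the \emph{extended} BG machinery is designed to handle: each correct process uses the abort/retry operation to keep re-launching the agreement promoting its own simulated process, so that its counterpart is never permanently blocked, while the instantaneous concurrency stays bounded by $\alpha(P)$. Reconciling these two requirements---per-process liveness of the decisions against the global bound of $\alpha(P)-1$ on the number of simultaneously blocked simulated processes---is the delicate part of the argument, and is where the adaptivity of both the set-consensus primitive and the simulation is essential.
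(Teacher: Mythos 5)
Your overall architecture (BGG simulation powered by the $\alpha$-adaptive set consensus, simulating a run of the $\alpha$-model algorithm $A$), your safety argument, and your bound of $\alpha(P)-1$ on blocked simulated processes all match the paper, and you correctly isolate the crux: guaranteeing that \emph{every} correct real process, not just all-but-$\alpha(P)-1$ of them, sees its own counterpart decide. But your resolution of that crux does not work. In this layered simulation the extended-BG agreements are executed by the \emph{BG simulators}, which are shared, consensus-driven entities; an individual real process cannot ``keep re-launching the agreement promoting its own simulated process,'' because its proposals reach the simulation only through the adaptive set-consensus layer, which guarantees that \emph{some} proposed command wins, never that a particular process's commands do. The per-owner liveness guarantee of extended BG simulation presumes that each owner \emph{is} one of the simulators; here there are only $\alpha(P)$ simulator slots for $n$ simulated processes, so it cannot be granted to every correct process. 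Moreover, even at the simulator level, the abort mechanism only unblocks agreements held by simulators that provably never return (the paper's footnote invokes it exactly for ``no longer active'' simulators); an \emph{active but not live} simulator is indistinguishable from a slow live one, so blind abort/retry can livelock and cannot push the number of permanently blocked simulated processes below $\alpha(P)-1$ --- which may include precisely the counterparts of the correct processes still waiting.

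The ingredient you are missing is the paper's feedback mechanism: a process that has obtained its (simulated) task output \emph{stops proposing simulation steps}. Consequently, if $k\geq 1$ correct processes still lack outputs, the number of active BG simulators eventually drops to at most $\min(k,\alpha(P))$, with at least one live among them. Combining this with \emph{breadth-first} simulation (every live simulator repeatedly attempts a step of every participating simulated process, rather than each process pushing only its own counterpart) yields that at most $\min(k,\alpha(P))-1 \leq k-1$ simulated processes take finitely many steps. The simulated run is then both a legal $\alpha$-model run \emph{and} one in which some correct process without an output is simulated as correct, so $A$ must give it an output --- a contradiction. This counting-plus-feedback argument is what replaces your per-process abort idea; without it, a run in which the adversary blocks exactly the counterparts of the correct processes (perfectly consistent with the plain BG bound you invoke) defeats your algorithm.
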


\begin{proof} 
Using $\alpha$-adaptive set consensus 
and read-write shared memory,
we can run $BGG$-simulation so that, 
when the participating set is~$P$,
at most $\alpha(P)$ \emph{BG simulators} are activated 
and at least one is live
(i.e., takes part in infinitely many simulation steps).
Moreover, we make a process provided with a (simulated) task output
to stop proposing simulated steps to BGG simulation. 
Hence, the number of active simulators is also bounded by 
the number of participating processes without an output, 
with at least one live BG simulator if there is 
a correct process without a task output.

These BG simulators are used to simulate an execution of 
a protocol solving~$T$ in the $\alpha$-model.
And so, since any finite run can be extended to 
a valid run of the $\alpha$-model, 
the protocol can only provide valid outputs. 

We make BG simulators execute the \emph{breadth-first} simulation: 
every BG simulator executes an infinite loop consisting of 
(1) updating the estimated participating set $P$, then (2) try to 
execute a simulation step of every process in~$P$, one by one.

Now assume that there exist $k\geq 1$ correct processes that
are never provided with a task output. 
BGG simulation ensure that we eventually have at most
$min(k,\alpha(P))$ active simulators, with at least one live among them. 
Let $s$ be such a live simulator. After every process in $P$ have taken
their first steps, $s$ tries to simulate steps for every process of $P$
infinitely often. A process simulation step can be blocked forever only due
to an active but not live BG simulator\footnote{
Note that the extended BG-simulation provides a mechanism which ensures that
a simulation step is not blocked forever by a no longer active BG simulator.}, 
thus there are at most $min(k,\alpha(P))-1$ simulated processes in $P$
taking only finitely many steps.

As at most $\alpha(P)-1$ processes have a finite number of
simulated steps, the simulated run is a valid run of the $\alpha$-model.
Moreover, as at most $k-1$ processes have a finite number of
simulated steps, there is one process never provided with a task output
simulated as a correct process.
But, a protocol solving a task eventually provides 
task outputs to every correct process --- a contradiction.
 \end{proof}

\noindent
Using Theorem~\ref{thm:AdaptiveAgreement} in combination with Theorem~\ref{thm:adaptiveAbstract},
we derive that:

\begin{corollary}
  \label{cor:adaptive}
  Let $M$ be any model, $\alpha_M$ be its agreement function, and $T$
  be any task that is solvable in the $\alpha_M$-model.
  Then $M$ solves $T$.
\end{corollary}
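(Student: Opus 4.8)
The plan is to combine the two theorems already established in the excerpt via a simple transitive composition. The corollary states that any model $M$ with agreement function $\alpha_M$ solves every task $T$ that is solvable in the $\alpha_M$-model. My reading of the excerpt tells me the entire argument is already assembled: Theorem~\ref{thm:AdaptiveAgreement} shows that $M$ can solve the $\alpha_M$-adaptive set consensus task (Algorithm~\ref{Alg:AdaptiveAgreement} runs in $M$ and provides exactly the agreement guarantee governed by $\alpha_M$), and Theorem~\ref{thm:adaptiveAbstract} shows that any read-write shared memory model solving the $\alpha$-adaptive set consensus task can solve every task solvable in the $\alpha$-model. So the corollary is just the chaining of these two facts with $\alpha = \alpha_M$.

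Concretely, I would first instantiate Theorem~\ref{thm:AdaptiveAgreement} with the model $M$ and its agreement function $\alpha_M$. Since $M$ has agreement function $\alpha_M$, by definition each participating set $P$ admits an $\alpha_M(P)$-agreement subroutine, which is precisely the $\alpha_M(P)$-\emph{agreement} primitive invoked on line~\ref{Alg:l:Agrrement} of Algorithm~\ref{Alg:AdaptiveAgreement}. Theorem~\ref{thm:AdaptiveAgreement} then certifies that the algorithm solves the $\alpha_M$-adaptive set consensus task in $M$. Thus $M$ is a read-write shared memory model that solves the $\alpha_M$-adaptive set consensus task.

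Next I would invoke Theorem~\ref{thm:adaptiveAbstract} with $\alpha = \alpha_M$. The hypothesis of that theorem---that the underlying model solves the $\alpha_M$-adaptive set consensus task---is exactly what the previous step supplied. Since $T$ is, by assumption, solvable in the $\alpha_M$-model, Theorem~\ref{thm:adaptiveAbstract} immediately yields that $T$ is solvable in $M$. This completes the argument.

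The only subtlety worth flagging---rather than a genuine obstacle---is checking that the two theorems compose with the same notion of ``model'' and the same $\alpha$: Theorem~\ref{thm:AdaptiveAgreement} must produce an algorithm running on read-write shared memory (so that it qualifies as the ``read-write shared memory model which solves the $\alpha$-adaptive set consensus task'' required by Theorem~\ref{thm:adaptiveAbstract}), and one must verify that $M$ indeed possesses the $\alpha_M(P)$-agreement primitives needed by Algorithm~\ref{Alg:AdaptiveAgreement}. Both hold by the very definition of the agreement function: $\alpha_M(P)$ is defined as the best level of set consensus achievable in $M$ when only processes in $P$ participate, so the required primitives exist. With these consistency checks in place, the corollary follows directly, and no new technical machinery is needed beyond the transitive step.
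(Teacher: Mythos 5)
Your proof is correct and is exactly the paper's own argument: the authors derive Corollary~\ref{cor:adaptive} precisely by combining Theorem~\ref{thm:AdaptiveAgreement} (which shows $M$ solves $\alpha_M$-adaptive set consensus) with Theorem~\ref{thm:adaptiveAbstract} (which transfers solvability from the $\alpha_M$-model to any read-write model solving that task). Your consistency checks at the end are sensible but, as you note yourself, follow directly from the definition of the agreement function.
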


\remove{
\section{Characterizing $k$-concurrency}
\label{sec:kconc}

A process is called \emph{active} at the end of a finite run $R$ if
it participates in $R$ but did not returned at the end of $R$.
Let $\textit{active}(R)$ denote the set of all processes that are
active at the end of $R$.     

A run $R$ is \emph{$k$-concurrent} ($k=1,\ldots,n$) if 
at most $k$ processes are \emph{concurrently active} in $R$, i.e.,
$\max\{|\textit{active}(R')|;\; R' \mbox{ prefix of }$R$\}\leq k$.
The \emph{$k$-concurrency} model is the set of $k$-concurrent AS runs.  

The model of \emph{$k$-active resilience}~\cite{GG11-univ} ensures
that in every run at most $k$ participating
processes are faulty and at least one participating process is
correct.

It has been shown in~\cite{GG11-univ} that the models of
$k$-concurrency and $(k-1)$-active resilience are equivalent, in the sense
that they are able to solve exactly the same set of tasks.

\begin{theorem}
\label{th:kconc}
The agreement function of the $k$-concurrency model is
$\alpha_{k\textit{-conc}}(P)= \min(|P|,k)$.
\end{theorem}
\begin{proof}
The easiest way to prove this claim is to refer to the recently
established equivalence~\cite{GG09,GHKR16}, with respect to task computability,
between the model of $k$-concurrency and the wait-free model in which,
in addition to the atomic-snapshot memory, \emph{$k$-set consensus objects}
can be accessed.
Thus, if the participating sets are subsets of $P$, 
then whenever $|P|<k$, the best level of set consensus that can be achieved 
in the $k$-concurrent model is $|P|$ (as at most $|P|$ distinct values are proposed), 
and if $|P|\geq k$, then it is $k$: 
read-write memory and $k$-set consensus objects cannot be used 
to solve $(k-1)$-set consensus. 
\end{proof}

\subsection{Task computability in the model of $k$-concurrency}

\begin{theorem}
\label{th:kconc:task}
A task can be solved in the model of $k$-concurrency if and
only if it can be solved $\alpha_{k\textit{-conc}}$-adaptively.
\end{theorem}    
\begin{proof}
The ``if'' direction is implied by Lemma~\ref{lem:adaptive}. The
``only if'' direction has already been shown
in~\cite{GG09,GHKR16}. The proof relies on simulating a $k$-concurrent
run by using BGG simulation where at most $min(k,\ell)$ BG simulator may be active, where $\ell$ is the number of active processes. This is done by making only active processes proposing simulation steps, and making the BG simulators advance the code of one of the most advanced available simulated process (\emph{depth-first} BG simulation). See~\cite{GHKR16} for more details on the proof of the simulation. 


\end{proof}

}

\section{Characterizing fair adversaries}
\label{sec:adv}

An \emph{adversary} $\A$ is a set of subsets of $\Pi$, called \emph{live sets}, $\A\subseteq 2^{\Pi}$.
An infinite run is \emph{$\A$-compliant} if the set of processes that are correct in that run
belongs to~$\A$. An adversarial $\A$-model is thus defined as the set of
$\A$-compliant runs. 

An adversary is \emph{superset-closed}~\cite{Kuz12} if each 
superset of a live set of~$\A$ is also an element of $\A$, i.e., 
if $\forall S\in \A$, $\forall S'\subseteq \Pi$, $S\subseteq S' \implies S'\in\A$. 
Superset-closed adversaries provide a non-uniform
generalization of the classical \emph{$t$-resilient} adversary
consisting of sets of $n-t$ or more processes.

An adversary $\A$ is a \emph{symmetric} adversary if it does not depend on process
identifiers: $\forall S \in \A$, $\forall S' \subseteq \Pi$, $|S'|=|S|
\implies S'\in\A$. Symmetric adversaries provides another interesting
generalization of the classical $t$-resilience condition and
$k$-obstruction-free progress condition~\cite{GG09} which was previously
formalized by Taubenfeld as its symmetric progress conditions~\cite{Tau10}.

\subsection{Set consensus power}

The notion of the \emph{set consensus power}~\cite{GK10} was originally
proposed to capture the power of adversaries in solving
\emph{colorless} tasks~\cite{BG93a,BGLR01}, i.e., tasks that can be defined by relating \emph{sets}
of inputs and outputs, independently of process identifiers.

\begin{definition}\label{def:scn}
The \emph{set consensus power} of $\A$, denoted by $\setcon(\A)$, is defined as follows:
\begin{itemize}
\item If $\A=\emptyset$, then $\setcon(\A)=0$
\item Otherwise, $\setcon(\A)= \max_{S\in \A} \min_{a\in S}
  \setcon(\A|_{S\setminus\{a\}})  + 1{}.$
  \footnote{$\A|_P$ is the adversary consisting of all live sets of $\A$ that are subsets of~$P$.}
\end{itemize}
\end{definition}
Thus, for a non-empty adversary $\A$, $\setcon(\A)$ is determined as
$\setcon(\A|_{S\setminus\{a\}})+1$ where~$S$ is an element of~$\A$ and~$a$ 
is a process in~$S$  that ``max-minimize'' $\setcon(\A|_{S\setminus\{a\}})$.
Note that for $\A\neq\emptyset$, $\setcon(\A)\geq 1$.

It is shown in~\cite{GK10}  that $\setcon(\A)$ is the smallest
$k$ such that $\A$ can solve $k$-set consensus.

It was previously shown in~\cite{GK11} that for a superset-closed adversary $\A$, 
the set consensus power of $\A$ is equal to $\HSS(\A)$, 
where $\HSS(\A)$ denote the minimal hitting set size of $\A$, 
i.e., a minimal subset of $\Pi$ that intersects with each live set of $\A$. 
Therefore if $\A$ is superset-closed, then $\setcon(\A)=\HSS(\A)$.
For a symmetric adversary $\A$, it can be easily derived from 
the definition of $\setcon$ that 
$\setcon(\A)= |\{k\in\{1,\dots,n\}:\exists S\in\A,|S|=k\}|$. 

\begin{theorem}
\label{th:adv}
The agreement function of adversary $\A$ is
$\alpha_{\A}(P)= \setcon(\A|_P)$.
\end{theorem}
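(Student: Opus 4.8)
The plan is to reduce the statement to the known characterization of set consensus power recalled just above. The central observation is that restricting the adversarial $\A$-model to runs in which no process of $\Pi\setminus P$ participates yields \emph{precisely} the $\A|_P$-model. Indeed, in any such run the participating set is contained in $P$, so the set of correct processes---being a live set of $\A$ and a subset of the participating set---is a live set of $\A$ contained in $P$, i.e.\ a live set of $\A|_P$; conversely, every $\A|_P$-compliant run is a run of the $\A$-model in which $\Pi\setminus P$ takes no step. Hence, by the very definition of the agreement function, $\alpha_\A(P)$ is the best level of set consensus attainable in the $\A|_P$-model.

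Next I would invoke the result of~\cite{GK10} quoted earlier: $\setcon(\A|_P)$ is the smallest $k$ for which $\A|_P$ solves $k$-set consensus. This yields both directions at once. For the upper bound, single-shot $\setcon(\A|_P)$-set consensus is solvable, and iterative solvability follows by running independent instances on fresh memory: since the set of correct processes is fixed throughout a run and forms a live set of $\A|_P$, each instance terminates for the correct processes, matching the requirement that iterative $\alpha_\A(P)$-set consensus be solvable. For the lower bound, $(\setcon(\A|_P)-1)$-set consensus is \emph{not} solvable in the $\A|_P$-model, which is exactly the requirement that $(\alpha_\A(P)-1)$-set consensus be unsolvable. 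Together these pin down $\alpha_\A(P)=\setcon(\A|_P)$ whenever $\A|_P\neq\emptyset$, in which case $\setcon(\A|_P)\geq 1$ as noted after Definition~\ref{def:scn}.

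Finally I would dispatch the boundary case $\A|_P=\emptyset$: here no live set of $\A$ is contained in $P$, so no infinite $\A$-compliant run can have its participating set inside $P$, since the set of correct processes could never be a live set. By the convention in the definition of the agreement function, this forces $\alpha_\A(P)=0=\setcon(\emptyset)=\setcon(\A|_P)$, so the claimed equality holds in this case too.

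The technical core is really the one-line appeal to~\cite{GK10}, so I do not expect a serious obstacle. The only points demanding care are the identification of the $P$-restricted model with the $\A|_P$-model (including the empty-adversary edge case handled by the $\alpha(P)=0$ convention) and the routine observation that, under a fixed adversary, single-shot and iterative set consensus solvability coincide, which lets the single-shot characterization of $\setcon$ settle the iterative formulation used in the definition of $\alpha$.
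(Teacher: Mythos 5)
Your proposal is correct and takes essentially the same route as the paper: restrict attention to runs in which only processes of $P$ take steps, observe that the possible sets of correct processes then reduce to the live sets of $\A|_P$, and invoke the result of~\cite{GK10} that $\setcon(\A|_P)$ is the smallest $k$ for which $\A|_P$ solves $k$-set consensus. If anything, you are more thorough than the paper's own brief proof, which spells out only the solvability direction and leaves the impossibility direction, the iterative formulation, and the $\A|_P=\emptyset$ convention implicit; your one imprecision is the claim that the restricted run set is \emph{exactly} the $\A|_P$-model (an $\A|_P$-compliant run may still contain finitely many steps of processes outside $P$), but the standard reduction in which processes of $P$ ignore steps of outsiders---whose termination is never required since they are faulty in every $\A|_P$-compliant run---closes this gap routinely.
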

\begin{proof}
An algorithm $A_{P}$ that solves $\alpha_{\A}(P)$-set consensus, 
assuming that the participating set is a subset of $P$, 
is a straightforward generalization of the result of~\cite{GK10}. 
It is shown in~\cite{GK10} that $\setcon(\A)$-set consensus can be solved in $\A$. 
But if we restrict the runs to assume that 
the processes in $\Pi\setminus P$ do not take a single step, 
then the set of possible live sets reduces to $\A|_P$. 
Thus using the agreement algorithm of~\cite{GK10} for the adversary $\A|_P$, 
we obtain a $\setcon(\A|_P)$-set consensus algorithm, 
or equivalently, an $\alpha_{\A}(P)$-set consensus algorithm.
\remove{ : simply wait until
at least one input of an element of some deterministically chosen
minimal-size hitting set of $\A|_P$ is written in the memory and adopt
it an output. Clearly, at most  $\HSS(\A|P)$ different input values
are output.
As at least one member of the hitting set must be correct, every
correct process eventually returns. 

\trnote{I removed the last two paragraphs that was not useful as showing a stronger result than required.}
}
\remove{To devise an adaptive algorithm that solves $\setcon(\A|P)$-set
consensus for any participating set $P$, we let the processes run
through a series of iterations.
In each iteration, every process first takes a memory snapshot to  evaluate the current set of
participants $P$ and adopts the input of the process that is found in
the highest iteration. 
Then it runs $A_{P}$ until it returns or the participating set changes.
The process writes the obtained value (or its previous value if the
$A_P$ did not return) together with its current iteration number in
the shared memory.    
The iteration terminates with a decision if the participating set has
not changed.    

Obviously, since the participating set eventually stops changing,
every correct process eventually decides. 
By applying the standard ``commit-adopt''
arguments~\cite{Gaf98,YNG98}, it is straightforward to see that if
some process decides in an iteration $\ell$ in which a participating
set $P$ was observed, not more  $\HSS(\A|P)$ different values can be
returned in total. We consider the smallest-size $P$ for which a
decision is produced and observe that every other process must adopt a
value decided by $A_{\P}$: there are at most $\HSS(\A|P)$ different such values.  }   
\end{proof}

\noindent
It is immediate from Theorem~\ref{th:adv} that $\A\subseteq \A'$ implies
$\setcon(\A)\leq \setcon(\A')$.

\remove{
\begin{property}
The agreement function of an adversary is regular.
\end{property}

\begin{proof}
Consider sets of processes $P$ and $Q$, and 
assume that the participating set is $P\cup Q$.
Now consider the following protocol:
(1) processes in $P$ solve an $\setcon(\A|_P)$ set-consensus algorithm by 
assuming that processes in $Q\setminus (P\cap Q)$ have failed;
(2) processes in $Q\setminus (P\cap Q)$ return directly their
proposal;
(3) processes also return any value decided by some process.
A process terminate as a process in $Q$ is correct and some
process returns according to (2), or else every process in $Q$
are faulty and thus (1) terminate. Moreover, it is easy to see
that at most $\setcon(\A|_P)$ distinct values can be returned by (1)
and at most $|Q\setminus (P\cap Q)|$ values can be returned by (2).
Thus we can derive that~$\alpha_\A(P\cup Q)\leq \alpha_\A(P)+|Q\setminus (P\cap Q)|$.
\end{proof}
}
\subsection{{\Fair} adversaries}

In this paper we propose a class of adversaries which encompasses both
classical classes of super-set closed and symmetric adversaries.
Informally, an adversary is \emph{{\fair}} 
if its set consensus power does not change if only a 
subset of the processes are participating in an agreement protocol.

More precisely, consider $\A$-compliant runs with participating set $P$ and assume that processes in
$Q\subseteq P$ want to reach agreement \emph{among themselves}: only
these processes propose inputs and are expected to produce outputs.
We can only guarantee outputs to processes in $Q$ when the set of 
correct processes include some process in $Q$, i.e.,
when the current live set intersect with $Q$. 
Thus, the best level of set consensus reachable by $Q$ is
defined the set consensus power of adversary  $\A|_{P,Q}=\{S\in \A|_P, S\cap Q\neq
\emptyset\}$, unless $|Q|<\setcon(\A|_P)$.

\begin{definition}\label{def:fair}[{\Fair} adversary]
An adversary $\A$ is {\fair} if and only if:
\[ \forall P \subseteq \Pi, \forall Q\subseteq P, \setcon(\A|_{P,Q})= min(|Q|,\setcon(\A|_P)){}.\]
\end{definition}

\remove{
  Indeed, if only a subset $Q$ of the current participation $P$ is participating to an agreement protocol, processes need to decide if there is a correct process in $Q$. If there is a correct process in $Q$ then it can be assumed that the valid live-set must include a process from $Q$ and therefore solving the best set consensus possible in $Q$ correspond to solve the best set consensus possible in $\A'=\{S\in \A|_P, S\cap Q\neq \emptyset\}$.

\begin{property}
Not all adversaries are {\fair}.
\end{property}

\begin{proof}
To see that not all adversaries are {\fair}, one can look at the adversary $\A=\{\{p_1\},\{p_2,p_3\},\{p_1,p_2,p_3\}\}$. It is easy to see that $\setcon(\A)=2$, but that $\setcon(\{S\in \A, S\cap \{p_2,p_3\}\neq\emptyset\}= \setcon(\{\{p_2,p_3\},\{p_1,p_2,p_3\}\})= 1$ which is not equal to $min(|\{p_2,p_3\}|,\setcon(\A))= 2$. Therefore $\A$ is an example of a non-{\fair} adversary, i.e., if every-process participate then $\{p_2,p_3\}$ can solve consensus between themselves even if the set consensus power of the all adversary is equal to $2$.
\end{proof}
}


\begin{property}
  \label{prop:nonfair}
  \[\setcon(\A|_{P,Q})\leq \min(|Q|,\setcon(\A|_P))\]
\end{property}
\begin{proof}
  For any $P\subseteq \Pi$ and $Q\subseteq P$,
  $\A|_{P,Q}=\{S\in \A|_P, S\cap Q\neq \emptyset\}$ is a subset of $\A|_P$
  and, thus, $\setcon(\A|_{P,Q}) \leq \setcon(\A|_P)$.
  Moreover,  $\setcon(\A|_{P,Q})\leq |Q|$,
  as $|Q|$-set consensus can be solved in $\{S\in \A|_P, S\cap Q\neq
  \emptyset\}$ as follows: every process waits until some
  process in $Q$ writes its input and decides on it.    
\end{proof}


\begin{theorem}
Any superset-closed adversary is {\fair}.
\end{theorem}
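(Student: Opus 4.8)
The plan is to reduce fairness to a purely combinatorial statement about minimal hitting sets and then exploit superset-closure directly. Since Property~\ref{prop:nonfair} already supplies the inequality $\setcon(\A|_{P,Q})\leq \min(|Q|,\setcon(\A|_P))$, the only thing I would need to establish is the matching lower bound $\setcon(\A|_{P,Q})\geq \min(|Q|,\setcon(\A|_P))$ for every $P\subseteq\Pi$ and every $Q\subseteq P$.

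First I would observe that both $\A|_P$ and $\A|_{P,Q}$ are themselves superset-closed when regarded as adversaries over the process subset $P$: if $S\subseteq S'\subseteq P$ and $S$ lies in either family, then $S'\in\A$ by superset-closure of $\A$, $S'\subseteq P$, and $S'\cap Q\supseteq S\cap Q$, so $S'$ stays in the same family. Consequently, applying the result of~\cite{GK11} within the subsystem of processes in $P$ yields $\setcon(\A|_P)=\HSS(\A|_P)$ and $\setcon(\A|_{P,Q})=\HSS(\A|_{P,Q})$. Writing $h=\HSS(\A|_P)$, it then suffices to prove $\HSS(\A|_{P,Q})=\min(|Q|,h)$.

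The hard part will be the lower bound, and I expect the crux of the whole argument to be the following hitting-set claim. Let $H$ be a minimal hitting set of $\A|_{P,Q}$, so $|H|=\HSS(\A|_{P,Q})$; note $|H|\leq|Q|$, since $Q$ itself meets every live set of $\A|_{P,Q}$ by definition. I claim that whenever $|H|<|Q|$, the set $H$ in fact hits all of $\A|_P$. To see this, pick any $q_0\in Q\setminus H$, which exists precisely because $|H|<|Q|$, and suppose for contradiction that some $S\in\A|_P$ satisfied $S\cap H=\emptyset$. As $H$ hits every set of $\A|_{P,Q}$, such an $S$ must have $S\cap Q=\emptyset$; but then $S\cup\{q_0\}\in\A$ by superset-closure, is contained in $P$, and meets $Q$ at $q_0$, so $S\cup\{q_0\}\in\A|_{P,Q}$ and is therefore hit by $H$. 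Since $q_0\notin H$, this forces $H\cap S\neq\emptyset$, a contradiction. Hence $H$ hits $\A|_P$ and $h=\HSS(\A|_P)\leq|H|$.

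A one-line case analysis then closes the argument. If $|H|=|Q|$, then combining with the upper bound $|H|\leq\min(|Q|,h)$ gives $|Q|\leq\min(|Q|,h)$, whence $\min(|Q|,h)=|Q|=|H|$. If instead $|H|<|Q|$, the claim gives $h\leq|H|$, so $|H|\geq\min(|Q|,h)$, and with the upper bound this is again an equality. In both cases $\setcon(\A|_{P,Q})=\HSS(\A|_{P,Q})=\min(|Q|,\setcon(\A|_P))$, which is exactly the condition of Definition~\ref{def:fair}, so $\A$ is {\fair}.
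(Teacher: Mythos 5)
Your proof is correct and follows essentially the same route as the paper's: both reduce set consensus power to minimal hitting set size via~\cite{GK11}, and both rest on the same key step --- a minimal hitting set $H$ of $\A|_{P,Q}$ with $|H|<|Q|$ must also hit $\A|_P$, since otherwise adjoining some $q\in Q\setminus H$ to a live set $S\in\A|_P$ disjoint from $H$ produces, by superset-closure, a live set of $\A|_{P,Q}$ missed by $H$. The only difference is presentational: the paper casts this as a proof by contradiction of non-fairness, whereas you organize it as a direct proof of the lower bound matching Property~\ref{prop:nonfair}.
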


\begin{proof}
Suppose that there exists a superset-closed adversary $\A$ that
is not {\fair}, i.e., by Property~\ref{prop:nonfair}, $\exists P\subseteq \Pi,\exists Q\subseteq P,
\setcon(\{S\in \A|_P, S\cap Q\neq \emptyset\})<min(|Q|,\setcon(\A|_P))$.
Clearly $\A|_P$ and $\A|_{P,Q}$ are
also superset-closed and, thus, $\setcon(\A|_P)=\HSS(\A|_P)$ and
$\setcon(\A|_{P,Q})=\HSS(\A|_{P,Q})$.

Since $\setcon(\A|_{P,Q})<|Q|$, a minimal hitting set $H'$ of
$\A|_{P,Q}$ is such that $|H'|<|Q|$, and therefore there exists a process
$q\in Q$, $q\not\in H'$. Also, since $\setcon(\A|_{P,Q})<\setcon(\A|_P)$, $H'$
is not a hitting set of $\A|_P$. 
Thus, there exists $S\in \A|_P$ such that $S\cap
H'=\emptyset$.
Hence, $(S\cup\{q\})\cap H'=\emptyset$.
Since $\A|_P$ is superset closed, we have $S\cup\{q\}\in \A|_P$ and, since
$q\in Q$, $S\cup\{q\}\in \A|_{P,Q}$.
But $(S\cup\{q\})\cap H'=\emptyset$---a contradiction 
with $H'$ being a hitting set of $\A|_{P,Q}$.
\end{proof}

\begin{theorem}
Any symmetric adversary is {\fair}.
\end{theorem}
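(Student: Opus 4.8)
The plan is to establish fairness by proving the lower bound $\setcon(\A|_{P,Q})\geq \min(|Q|,\setcon(\A|_P))$, since the matching upper bound is exactly Property~\ref{prop:nonfair}; in fact the induction below will compute $\setcon(\A|_{P,Q})$ on the nose. First I would record two consequences of symmetry. Writing $\Sigma=\{|S|:S\in\A\}$ for the set of live-set sizes, symmetry means $S\in\A \iff |S|\in\Sigma$, so $\A|_P$ is again symmetric on the ground set $P$ and $\setcon(\A|_P)=|K_P|$ with $K_P=\Sigma\cap\{1,\dots,|P|\}$. Moreover, any permutation $\pi$ of $\Pi$ fixes $\A$ (it preserves sizes) and carries $\A|_{P,Q}$ to $\A|_{\pi(P),\pi(Q)}$; since $\setcon$ is a purely combinatorial invariant of the inclusion structure, $\setcon(\A|_{P,Q})$ depends only on the pair $(|P|,|Q|)$. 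I denote this common value by $f(|P|,|Q|)$, reducing the whole statement to a claim about a two-variable function.

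The core is an induction on $|P|$ driven by the recursive definition of $\setcon$. The enabling structural identity is that restricting $\A|_{P,Q}$ stays in the same family: for $S\in\A|_{P,Q}$ and $a\in S$, $(\A|_{P,Q})|_{S\setminus\{a\}}=\A|_{S\setminus\{a\},\,Q\cap(S\setminus\{a\})}$, whose set consensus power is therefore $f(|S|-1,|Q\cap(S\setminus\{a\})|)$. For a live set $S$ of size $s\in K_P$ meeting $Q$ in $j$ processes, removing a process of $S\cap Q$ drops the overlap to $j-1$ while removing one outside $Q$ keeps it at $j$; since $\A|_{P,Q}\subseteq\A|_{P,Q'}$ whenever $Q\subseteq Q'$, monotonicity of $\setcon$ makes $f$ nondecreasing in its second argument, so the inner minimum over $a$ is attained on $Q$ and equals $f(s-1,j-1)$. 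The recursive definition then reads $f(|P|,|Q|)=1+\max_{s\in K_P}\max_{1\le j\le \min(s,|Q|)} f(s-1,j-1)$, a valid such $S$ existing whenever $|Q|\geq 1$ and $K_P\neq\emptyset$; the degenerate cases ($Q=\emptyset$ or no live set fitting in $P$) give $f=0$, matching $\min(|Q|,\cdot)$.

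Plugging in the inductive formula $f(s-1,j-1)=\min(j-1,|K_{s-1}|)$ and optimizing (the largest admissible size $s=k_r:=\max K_P$ with the largest overlap $j=\min(s,|Q|)$ is best) collapses the recursion to $f(|P|,|Q|)=\min(\min(k_r,|Q|),|K_P|)$. The final simplification uses the trivial bound $|K_P|\le k_r$ (as $K_P\subseteq\{1,\dots,k_r\}$), which reduces $\min(k_r,|Q|,|K_P|)$ to $\min(|Q|,|K_P|)=\min(|Q|,\setcon(\A|_P))$, as required. I expect the main obstacle to be the bookkeeping of the max--min step: confirming that the optimum is achieved at the largest live set with maximal overlap with $Q$, that the worst removed process lies in $Q$, and that the boundary cases are consistent with the claimed formula.
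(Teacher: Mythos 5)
Your proof is correct, and it takes a genuinely different route from the paper's. The paper argues by contradiction and descent: assuming $\setcon(\A|_{P,Q})<\min(|Q|,\setcon(\A|_P))$ for some $Q\subseteq P$, it uses symmetry to produce a strictly smaller violating pair $(P',Q')$, and iterates until it reaches a pair with $Q'\neq\emptyset$ and $\A|_{P'}\neq\emptyset$ yet $\A|_{P',Q'}=\emptyset$, which symmetry rules out; it never computes $\setcon(\A|_{P,Q})$ exactly. You instead compute it on the nose: permutation invariance makes $\setcon(\A|_{P,Q})$ a function $f(|P|,|Q|)$ of cardinalities alone, the restriction identity $(\A|_{P,Q})|_{S\setminus\{a\}}=\A|_{S\setminus\{a\},\,Q\cap(S\setminus\{a\})}$ keeps the recursion of Definition~\ref{def:scn} inside this family, and induction on $|P|$ together with monotonicity of $f$ in its second argument yields the closed form $f(|P|,|Q|)=\min(|Q|,|K_P|)$, where $K_P$ is the set of live-set sizes at most $|P|$. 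This buys strictly more than the theorem: an exact formula for $\setcon(\A|_{P,Q})$, of which fairness is an immediate corollary (and which even re-derives the paper's formula $\setcon(\A|_P)=|K_P|$ as the special case $Q=P$); the paper's descent is shorter but yields only the inequality needed. Two spots in your argument deserve one extra line each: your displayed recursion lets $j$ range over all of $\{1,\dots,\min(s,|Q|)\}$, although pairs with $s-j>|P|-|Q|$ are realized by no actual $S\subseteq P$ --- this is harmless because $f$ is nondecreasing in $j$ and the maximizing pair $j=\min(s,|Q|)$ is always realizable, but stating the equality of the two maxima requires exactly that remark; and the collapse to $\min(\min(k_r,|Q|),|K_P|)$ silently uses $|K_{k_r-1}|=|K_P|-1$, which holds precisely because $k_r=\max K_P$.
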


\begin{proof}
The set consensus power of a generic adversary $\A$ is defined recursively
through finding $S\in \A$ and $p\in S$ which max-minimize the
set consensus power of $\A|_{S\setminus\{p\}}$.
Let us recall that if $\A\subseteq \A'$ then
$\setcon(\A)\leq \setcon(\A')$.
Therefore, $S$ can always be selected to
be \emph{locally maximal}, i.e., such that there is no live set in $S'\in \A$
with $S\subsetneq S'$.

Suppose by contradiction that $\A$ is symmetric but not {\fair}, i.e.,
by Property~\ref{prop:nonfair},
for some $P\subseteq \Pi$ and $Q\subseteq P$, 
$\setcon(\A|_{P,Q})<min(|Q|,\setcon(\A|_P))$.  
We show that if the property holds for $P$ and $Q$ such that $\A|_{P,Q}\neq\emptyset$ then
it also holds for some $P'\subsetneq P$ and $Q'\subseteq Q$.

First, we observe that $|Q|>1$, otherwise $\setcon(\A|_{P,Q})=0$ and, thus,
we have $\A|_{P,Q}=\emptyset$.

Since $\A$ is symmetric, $\A|_P$ is also symmetric.
Thus, for every $S\in\A|_P$ and $p\in S$ such that
$\setcon(\A|_P)= 1+\setcon(\A|_{S\setminus\{p\}})$, any $S'$
such that $|S'|=|S|$ and for any $p'\in S'$, we also have    
$\setcon(\A|_P)= 1+\setcon(\A|_{S'\setminus\{p'\}})$.
 Since we can always choose $S$ to be a maximal set, we derive that
the equality holds for every maximal set $S$ in $\A|_P$ and every
$p\in S$. 

%
Let us recall that, by the definition of $\setcon$, there exists $L\in \A|_{P,Q}$ and $a\in L$ such that 
$\setcon(\A|_{P,Q})= 1+\setcon((\A|_{P,Q})|_{L\setminus\{a\}})=\setcon(\A|_{L,Q})$.
Since $\A|_P$ is symmetric,
for all $L'$, $|L'|=|L|$ and $L\cap Q\subseteq L'\cap Q$, we have
$\setcon(\A|_{L',Q})\geq\setcon(\A|_{L,Q})$. Indeed, modulo a permutation
of process identifiers, $\A|_{L',Q}$ contains all the live sets of
$\A|_{L,Q}$ plus live sets in $\A|_{L'}$ that overlap with $(L'\cap
Q)\setminus(L\cap Q)$.
Since $\setcon(\A|_{L,Q})=\setcon(\A|_{P,Q})$ and $L'\in \A|_{P,Q}$, we
have $\setcon(\A|_{L',Q})=\setcon(\A|_{L,Q})$.
Therefore, for any $a\in L'$, 
$\setcon(\A|_{L'\setminus\{a\},Q})< \setcon(\A|_{L'\setminus\{a\}})$.

In particular, for $L'$ with $L'\cap Q\in \{L',Q\}$,
$\setcon(\A|_{L',Q})=\setcon(\A|_{L,Q})$.
Note that $L'\nsubseteq Q$, otherwise, $\A|_{L',Q}=\A|_{L'}$ and, thus,
$\setcon(\A|_{L',Q})=\setcon(\A|_{L'})=\setcon(\A|_P)$, contradicting
our assumption.

Thus, let us assume that $Q\subsetneq L'$.
%
%
%
Note that $Q'=Q\setminus\{a\}\subsetneq L'\setminus\{a\}$, and since
$|Q|\geq 2$, $Q'\neq\emptyset$, 
we have $\setcon(\A|_{P',Q'})< \setcon(\A|_{P'})$ for
$P'=L'\setminus\{a\}$ and $Q'\subseteq P'$, $Q'\neq\emptyset$.   
Furthermore, since $\setcon(\A|_{P,Q})<|Q|$, we have
$\setcon(\A|_{P',Q'})<|Q'|$. 

By applying this argument inductively, we end up with a live set
$P$ and $Q\subseteq P$ such that $\setcon(\A|_P)\geq 1$,
$Q\neq\emptyset$ and 
$\setcon(\A|_{P,Q})=0$.
By the definition of $\setcon$, $\A|_P\neq\emptyset$ and 
$\A|_{P,Q}=\emptyset$.
But $\A|_P$ is symmetric and $Q\neq\emptyset$, so for every $S\in \A|_P$, there exists
$S'\in\A|_P$ such that $|S|=|S'|$ and $S'\cap Q \neq \emptyset$,
i.e.,  $\A|_{P,Q}\neq\emptyset$---a contradiction. 
\end{proof}

Note that not all adversaries are fair. 
For example, the adversary 
$\A=\{\{p_1\},\{p_2,p_3\},\{p_1,p_2,p_3\}\}$ is not fair. 
On the other hand, not all fair adversaries are either 
super-set closed or symmetric. For example, the adversary
$\A=2^{\{p_1,p_2,p_3\}}\setminus\{p_1,p_2\}$ is fair but is
neither symmetric not super-set closed. Understanding 
what makes an adversary fair is an interesting challenge.

\subsection{Task computability in {\fair} adversarial models}

In this section, we show that the task computability 
of a {\fair} adversarial $\A$-model is fully grasped by 
its associated agreement function $\alpha_\A$.

\begin{algorithm}
 \caption{Code for BG simulator $s_i$ to simulate adversary $\A$.\label{Alg:AdversarySimulation}}
\SetKwRepeat{Repeat}{Repeat}{Forever}%
\textbf{Shared variables:} $R[1,\dots,\alpha_\A(\Pi)] \leftarrow (\bot,\emptyset) ,P_{MEM}[p_1,\dots,p_n] \leftarrow \bot$\;\label{•}l{Alg:Adv:MemoryInitState}
\textbf{Local variables:} $S_{cur},S_{tmp},P,A,W \in 2^\Pi,p_{cur},p_{tmp} \in \mathbb{N},S_{cur}\leftarrow\emptyset$\;\label{Alg:Adv:Init}
	
\vspace{1em}

\Repeat{}{\label{Alg:Adv:loop-begin}
	$P = \{p\in \Pi, P_{MEM}[p]\neq \bot\}$\;\label{Alg:Adv:Participation}
	$A = \{p\in P, P_{MEM}[p]\neq \top\}$\;
	
	\If{$i \geq min(|A|,\alpha_\A(P))$}{\label{Alg:Adv:TestActive}

	$W = P$\;\label{Alg:Adv:LSselection-begin}
	\For{$j=\alpha_\A(\Pi)$ \textbf{down to} $i+1$}{\label{Alg:Adv:WselectLoop}
		$(p_{tmp},S_{tmp}) \leftarrow R[j]$\;
		\If{$(p_{tmp}\neq\bot)\wedge(S_{tmp}\subseteq W)\wedge((\setcon(\A|_{S_{tmp},A})\geq j))$}
                {\label{Alg:Adv:RecentlyActive}
			$W \leftarrow S_{tmp}\setminus\{p_{tmp}\}$\;\label{Alg:Adv:Wselect}
		}
	}
	
	\vspace{1em}
	
	 \If{$(S_{cur}\not\subseteq W)\vee(\setcon(\A|_{S_{cur},A})< i)$}{\label{Alg:Adv:CheckCur}
	\If{$\exists S\in \A|_W, \setcon(\A|_{S,A})\geq i$}{\label{Alg:Adv:CheckExists}
		$S_{cur} = S\in \A|_W \mathbf{\ such\ that\ } \setcon(\A|_{S,A})\geq i$\;\label{Alg:Adv:SelectNew}
	}
	\lElse{$S_{cur}=  S\in \A|_P$}\label{Alg:Adv:SelectAny}
	$p_{cur} = S_{cur}.first()$\;
	$R[i] \leftarrow (p_{cur},S_{cur})$\;\label{Alg:Adv:LSselection-end}
	}
	
	\vspace{1em}
	
	\If{$(\mathbf{SimulateStep}(p_{cur})=SUCCESS)$}{\label{Alg:Adv:LSsimulation-begin}
		\lIf{$\mathbf{Outputed}(p_{cur})$}{$P_{MEM}[p_{cur}]=\top$}\label{Alg:Adv:SetTerminated}
		$p_{cur} = S_{cur}.next(p_{cur})$\;
	}
	\lElse{
		$\mathbf{AbortStep}(p_{cur})$\label{Alg:Adv:LSsimulation-end}
	}
	}
}\label{Alg:Adv:loop-end}
\end{algorithm}

Using BGG simulation, we show that the $\alpha_\A$-model 
can be used to solve any task $T$ solvable in the $\A$-model.
In the simulation, up to $\alpha(P)$ BG simulators execute the
given algorithm solving $T$, where $P$ is the participating set of the
current run.  
\remove{
To do so, the simulation executes an algorithm solving $T$ in the 
$\A$-model and tries to provide a run such that 
(1) the run is $\A$-compliant, i.e., a run in which the set of processes 
performing infinitely many simulated steps is a live set of $\A$, 
and (2) the set of processes performing infinitely many simulated steps
includes a correct process not provided with a task output.
Succeeding in providing such a run is not possible.
Indeed, an $\A$-compliant run of an algorithm solving a task 
eventually provides outputs to every correct process.
}
We adapt the currently simulated live set to include processes 
not yet provided with a task output, and ensure that the chosen live set is
simulated sufficiently long until some active processes are provided
with outputs of $T$.
The simulation terminates as soon as all correct processes are
provided with outputs.      


The code for BG simulator $b_i\in\{b_1,\dots,b_{\alpha_\A(\Pi)}\}$ is
given in Algorithm~\ref{Alg:AdversarySimulation}. 
It consists of two parts: (1)~selecting a live set to simulate
(lines~\ref{Alg:Adv:LSselection-begin}--\ref{Alg:Adv:LSselection-end}), 
and (2)~simulating processes in the selected live set 
(lines~\ref{Alg:Adv:LSsimulation-begin}--\ref{Alg:Adv:LSsimulation-end}).

\myparagraph{Selecting a live set.}
This is the most involved part.
The idea is to select a participating live set 
$L\subseteq P$ such that: 
(1) the set consensus power of $\A|_{L,A}=\{S\in \A|_L,S\cap
A\neq\emptyset\}$, with $A$ the set of participating processes not yet provided
with a task output, is greater than or equal to the BG simulator identifier $i$;
(2) $L$ is a subset of the live sets currently selected 
by live BG simulators with greater identifiers;
(3) $L$ does not contain the processes currently simulated 
by live BG simulators with greater identifiers.

The live set selection in Algorithm~\ref{Alg:AdversarySimulation} consists in two phases.
First, BG simulators determine a \emph{selection window} $W$, $W\subseteq P$, i.e., 
the largest set of processes which is a subset 
of the live sets selected by live BG simulators with greater identifiers, 
and which excludes the processes currently selected 
by live BG simulators with greater identifiers
(lines~\ref{Alg:Adv:LSselection-begin}--\ref{Alg:Adv:Wselect}). 
This is done iteratively on all BG simulators with 
greater identifiers, from the greatest to the lowest. 
At each iteration, if the targeted BG simulator $b_k$ \emph{appears live},
the current window is restricted to the live set selected by $b_k$, 
but excluding the process selected by $b_k$. 
Determining if $b_k$ appears live is simply done by checking whether, 
with the current simulation status observed,
the live set selected by $b_k$ is \emph{valid}, 
i.e., satisfies conditions~(1),~(2) and~(3) above.

The second phase (lines~\ref{Alg:Adv:CheckCur}--\ref{Alg:Adv:LSselection-end}), 
consists in checking if the currently selected live set is valid (line~\ref{Alg:Adv:CheckCur}). 
If not, the BG simulator 
tries to select a live set $L$ which belongs to the selection window $W$, and hence satisfies (2) and (3),
but also such that the set consensus power of $\A_{L,A}$ is greater than $i$, 
the BG simulator identifier (line~\ref{Alg:Adv:SelectNew}).
If the simulator does not find such a live set, 
it simply selects any available live set (line~\ref{Alg:Adv:SelectAny}).

\myparagraph{Simulating a live set.}
The idea is that, if the selected live set does not change, 
the BG simulator simulates steps of 
every process in its selected live set infinitely often. 
Unlike conventional variations of BG simulations, a BG simulator here
does not skip a blocked process simulation,
instead it aborts and re-tries the same simulation step until it is successful.

Intuitively, this does not obstruct progress because,
in case of a conflict, there are two live BG simulators blocked on the 
same simulation step, but the BG simulator with the smaller identifier
will eventually change its selected live set and release the corresponding process.

\myparagraph{Pseudocode.}
The protocol executed by processes in the $\alpha_\A$-model is the following:
Processes first update their status in $P_{MEM}$ 
by replacing $\bot$ with their initial state. 
Then, processes participate in an $\alpha_{\A}$-adaptive BGG simulation 
(i.e., BGG simulation runs on top of an $\alpha_{\A}$-adaptive
set consensus protocol), 
where BG simulators use Algorithm~\ref{Alg:AdversarySimulation} 
to simulate an algorithm solving a given task~$T$ in the adversarial $\A$-model. 
When a process $p$ observes that $P_{MEM}[p]$ has been set to $\top$
(``termination state''),
it stops to propose simulation steps. 

\myparagraph{Proof of correctness.} 
Let $P_f$ be the participating set of the $\alpha_\A$-model run, 
and let $A_f$ be the set of processes $p\in P_f$ such that $P_{MEM}[p]$ is never set to $\top$.

\begin{lemma}
\label{lem:stable}
  There is a time after which variables $P$ and $A$ in
  Algorithm~\ref{Alg:AdversarySimulation} 
  become constant and equal to $A_f$ and $P_f$ for all live BG simulators. 
\end{lemma}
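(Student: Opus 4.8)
The plan is to argue that the shared array $P_{MEM}$ eventually freezes, and then to identify its frozen contents with $P_f$ and $A_f$. First I would record the monotonicity of each cell of $P_{MEM}$: cell $P_{MEM}[p]$ starts at $\bot$, is overwritten (at most once) by the initial state of $p$ when $p$ takes its first step in the $\alpha_\A$-model, and is overwritten (at most once) by $\top$ at line~\ref{Alg:Adv:SetTerminated} when some BG simulator detects an output for the simulated process $p$; it is never reset. Hence each of the $n$ cells changes value at most twice, so $P_{MEM}$ is modified only finitely often and there is a finite time $T^*$ after which it is constant. Since the quantities $P$ and $A$ computed at the top of the loop (starting at line~\ref{Alg:Adv:Participation}) are pure functions of $P_{MEM}$, they take constant values $P_\infty$ and $A_\infty$ on every read performed after $T^*$.

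Next I would identify $P_\infty$ with $P_f$ and $A_\infty$ with $A_f$. For $P$: a process $p\in P_f$ participates, hence writes its initial state to $P_{MEM}[p]$ at some finite time, so $P_{MEM}[p]\neq\bot$ in the frozen state; conversely a process $p\notin P_f$ never writes, and it is never simulated either, since the live-set selection only ever picks live sets contained in the observed participating set (the window $W$ is initialized to $P$ and only shrinks, and $S_{cur}$ is chosen from $\A|_W$ or $\A|_P$), so no simulator sets $P_{MEM}[p]=\top$. Thus $P_{MEM}[p]\neq\bot$ in the frozen state exactly for $p\in P_f$, i.e.\ $P_\infty=P_f$. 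For $A$: by definition $A=\{p\in P : P_{MEM}[p]\neq\top\}$, so in the frozen state $A_\infty$ is the set of $p\in P_f$ whose cell is never set to $\top$, which is exactly the set $A_f$.

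Finally, every live BG simulator executes the loop of Algorithm~\ref{Alg:AdversarySimulation} infinitely often and therefore recomputes $P$ and $A$ at infinitely many times, in particular at some time past $T^*$; from then on it reads $P=P_f$ and $A=A_f$, which proves the claim. The only delicate point is the consistency argument in the previous paragraph---ruling out a non-participant ever acquiring a non-$\bot$ cell---since this is what pins the frozen value of $P$ to exactly $P_f$ rather than to some superset. I would state and use it as an invariant proved by induction on the sequence of writes: the first non-$\bot$ write to any cell $P_{MEM}[p]$ must come from $p$ itself, because a $\top$-write requires $p$ to be simulated, which requires $p\in S_{cur}\subseteq P$ and hence $P_{MEM}[p]\neq\bot$ already.
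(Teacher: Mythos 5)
Your proof is correct and takes essentially the same route as the paper's: the cells of $P_{MEM}$ evolve monotonically ($\bot$, then the process's initial state, then possibly $\top$), so by finiteness of $\Pi$ the array freezes at contents encoding exactly $P_f$ and $A_f$, and every live BG simulator, recomputing $P$ and $A$ infinitely often, eventually reads these frozen values. Your explicit invariant ruling out a $\top$-write to a non-participant's cell (a simulated process always lies in $S_{cur}\subseteq P$) is a point the paper's terser proof leaves implicit, but the argument is otherwise the same.
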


\begin{proof}
Since $\Pi$ is finite, the set of processes $p$ such that $P_{MEM}[p]\neq \bot$ eventually corresponds to $P_f$ 
as the first step of $p$ is to set $P_{MEM}[p]$ to its initial state and 
$P_{MEM}[p]$ can only be updated to $\top$ afterwards. 
As after $P_{MEM}[p]$ is set to~$\top$, it cannot be set to another value, 
eventually, the set of processes from $P_f$ 
such that $P_{MEM}[p]\neq \top$ is equal to~$A_f$.
Live BG simulators update $P$ and $A$ infinitely often, 
so eventually their values of $P$ and $A$ are equal to $P_f$ and $A_f$ respectively.
\end{proof}

\begin{lemma}
\label{lem:bgsim}
  If $A_f$ contains a correct process, then there is a
  correct BG simulator with an identifier smaller or equal to $min(|A_f|,\alpha_\A(P_f))$.
\end{lemma}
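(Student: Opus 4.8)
The plan is to combine the stabilization guaranteed by Lemma~\ref{lem:stable} with the adaptivity of the BGG simulation established in the proof of Theorem~\ref{thm:adaptiveAbstract}. First I would fix a time $\tau_0$ after which, by Lemma~\ref{lem:stable}, every live BG simulator permanently observes $P=P_f$ and $A=A_f$; from this point on the quantity $\min(|A_f|,\alpha_\A(P_f))$ that governs the activity test at line~\ref{Alg:Adv:TestActive} of Algorithm~\ref{Alg:AdversarySimulation} is constant across all live simulators.

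The core input is the behaviour of the underlying $\alpha_\A$-adaptive BGG simulation. Recall from the proof of Theorem~\ref{thm:adaptiveAbstract} that this simulation is \emph{adaptive}: when the participating set is $P$ and $a$ participating processes are still proposing simulation steps (i.e.\ have not been set to $\top$ and hence still drive the adaptive set consensus), the number of BG simulators that take infinitely many simulation steps is at most $\min(a,\alpha_\A(P))$, and, by the convention of the adaptive simulation, the simulators kept active are those with the smallest identifiers, $b_1,\dots,b_{\min(a,\alpha_\A(P))}$. Since a faulty process eventually stops taking steps, after $\tau_0$ the processes that keep proposing steps forever are precisely the correct processes of $A_f$, whose number is at most $|A_f|$. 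Consequently the set of live BG simulators is contained in $\{b_1,\dots,b_m\}$ with $m\le\min(|A_f|,\alpha_\A(P_f))$; that is, every correct BG simulator has identifier at most $\min(|A_f|,\alpha_\A(P_f))$.

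It remains to produce one correct BG simulator, which is where the hypothesis enters. If $A_f$ contains a correct process $p$, then $p$ never receives a task output and therefore proposes simulation steps forever, continuing to drive the $\alpha_\A$-adaptive set consensus. The liveness guarantee of the BGG simulation then forces at least one BG simulator to take infinitely many simulation steps, so $m\ge 1$. That simulator is live, hence correct, and by the previous paragraph its identifier is at most $\min(|A_f|,\alpha_\A(P_f))$, which proves the claim.

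I expect the main obstacle to be justifying the identifier \emph{bound} rather than mere existence: one must argue that the simulators kept alive by the adaptive BGG simulation are the lowest-indexed ones, so that a live simulator is guaranteed to fall in the range $\{1,\dots,\min(|A_f|,\alpha_\A(P_f))\}$ rather than merely being some simulator among $b_1,\dots,b_{\alpha_\A(\Pi)}$. This relies on the precise adaptivity convention of the simulation of~\cite{GHKR16} together with the activity test at line~\ref{Alg:Adv:TestActive}, and on the observation that the faulty processes counted in $A_f$ do not inflate the effective number of drivers.
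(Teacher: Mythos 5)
Your proof is correct and takes essentially the same route as the paper's: both arguments observe that eventually only the correct processes of $A_f$ keep proposing simulation steps, so the $\alpha_\A$-adaptive set consensus effectively becomes $\min(|A_f|,\alpha_\A(P_f))$-set consensus, and then invoke the adaptivity and liveness guarantees of the BGG simulation to obtain a live (hence correct) simulator with identifier at most $\min(|A_f|,\alpha_\A(P_f))$. The only difference is cosmetic: you spell out the convention that the surviving simulators are the lowest-indexed ones, which the paper's proof uses implicitly when it asserts the identifier bound.
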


\begin{proof}
In our protocol, eventually only correct processes in $A_f$ are 
proposing BGG simulation steps. Thus eventually, at most 
$|A_f|$ distinct simulations steps are proposed. 
The $\alpha_A$-adaptive set consensus protocol used for BGG simulation 
ensures that at most $\alpha_\A(P_f)$ distinct 
proposed values are decided. But as there is a time after which only processes in $A_f$ 
propose values, eventually, $min(|A_f|,\alpha_\A(P_f))$-set consensus
is solved. Thus BGG simulation ensures that, when this is the case, there is a live 
BG simulator with an identifier smaller or equal to $min(|A_f|,\alpha_\A(P_f))$.
\end{proof}

Suppose that $A_f$ contains a correct process, and let $b_m$ be the
greatest live BG simulator such that $m\leq
min(|A_f|,\alpha_\A(P_f))$ (by Lemma~\ref{lem:bgsim}).
Let $S_i(t)$ denote the value of $S_{cur}$ and
let $p_i(t)$ denote the value of $p_{cur}$ at simulator $b_i$ at time
$t$.
Let also $\tau_f$ be the time after which every active but not live BG simulators
have taken all their steps, and after which $A$ and $P$ have become constant and
equal to $A_f$ and~$P_f$ for every live BG simulator (by Lemma~\ref{lem:stable}).

\begin{lemma}
For every live BG simulator $b_s$, with $s\leq
min(|A_f|,\alpha_\A(P_f))$,
eventually,  $b_s$ cannot fail the test on line~\ref{Alg:Adv:CheckExists}.\label{lem:ValidSetcon}
\end{lemma}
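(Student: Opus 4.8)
The plan is to reduce the Boolean guard on line~\ref{Alg:Adv:CheckExists} to a statement purely about the set consensus power of the selection window $W$, and then to control that power with a single monotonicity property of $\setcon$. I work after the stabilization time $\tau_f$, so that by Lemma~\ref{lem:stable} the local variables $P$ and $A$ of $b_s$ are frozen at $P_f$ and $A_f$; in particular every evaluation of $\setcon(\A|_{\cdot,A})$ inside the window loop uses $A=A_f$. First I would show that, for any $W$, the guard ``$\exists S\in\A|_W$ with $\setcon(\A|_{S,A_f})\geq s$'' is equivalent to $\setcon(\A|_{W,A_f})\geq s$. The direction needed for the lemma is the converse: given $\setcon(\A|_{W,A_f})\geq s$, take for $S$ the live set achieving the outer maximum in the recursive Definition~\ref{def:scn} of $\setcon(\A|_{W,A_f})$; applying that same recursion restricted to subsets of $S$ yields $\setcon(\A|_{S,A_f})\geq\setcon(\A|_{W,A_f})\geq s$, so $S$ witnesses the guard. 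Thus it suffices to prove $\setcon(\A|_{W,A_f})\geq s$ whenever $b_s$ evaluates line~\ref{Alg:Adv:CheckExists} after $\tau_f$.

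The main technical ingredient — and the step I expect to be the crux — is the monotonicity property that removing one process from a live set drops its restricted power by at most one: for every $S\in\A$ and every process $p$,
\[\setcon(\A|_{S\setminus\{p\},A_f})\geq\setcon(\A|_{S,A_f})-1.\]
I would prove this directly from Definition~\ref{def:scn}. Writing $k=\setcon(\A|_{S,A_f})$, the case $k=0$ is trivial, and for $k\geq 1$ I choose the live set $T\subseteq S$ attaining the outer maximum, so that $\setcon(\A|_{T\setminus\{a\},A_f})\geq k-1$ for every $a\in T$. If $p\notin T$, then $T\subseteq S\setminus\{p\}$ and a direct application of the recursion with this $T$ gives $\setcon(\A|_{S\setminus\{p\},A_f})\geq k$. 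If $p\in T$, then $\A|_{T\setminus\{p\},A_f}\subseteq\A|_{S\setminus\{p\},A_f}$, and inclusion-monotonicity of $\setcon$ yields $\setcon(\A|_{S\setminus\{p\},A_f})\geq\setcon(\A|_{T\setminus\{p\},A_f})\geq k-1$. Either way the bound holds.

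Finally I would analyse the window computation (lines~\ref{Alg:Adv:LSselection-begin}--\ref{Alg:Adv:Wselect}). The loop initializes $W=P_f$ and only ever overwrites $W\leftarrow S_{tmp}\setminus\{p_{tmp}\}$ at an index $j\geq s+1$ whose guard forces $\setcon(\A|_{S_{tmp},A_f})\geq j$. If some overwrite occurs, the last one, at index $j^\ast\geq s+1$, leaves $W=S_{tmp}\setminus\{p_{tmp}\}$ with $\setcon(\A|_{S_{tmp},A_f})\geq j^\ast$, so the monotonicity property gives $\setcon(\A|_{W,A_f})\geq j^\ast-1\geq s$. If no overwrite occurs, then $W=P_f$ and, since $\A$ is {\fair} (Definition~\ref{def:fair}), the fairness identity combined with Theorem~\ref{th:adv} gives $\setcon(\A|_{P_f,A_f})=\min(|A_f|,\alpha_\A(P_f))\geq s$, the last inequality holding because $b_s$ is live with $s\leq\min(|A_f|,\alpha_\A(P_f))$. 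In both cases $\setcon(\A|_{W,A_f})\geq s$, so by the first step the guard on line~\ref{Alg:Adv:CheckExists} holds; hence after $\tau_f$ the simulator $b_s$ can never fail it. I would stress that this argument needs no induction over the higher-identifier simulators: the monotonicity property makes any appears-live constraint imposed at a level $j\geq s+1$ automatically compatible with achieving $s$-set-consensus power, which is what collapses the expected inductive bookkeeping into a one-line bound.
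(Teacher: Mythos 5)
Your proof is correct and follows essentially the same route as the paper's: after the stabilization time $\tau_f$ you case-split on whether the window ends up equal to $P_f$ (handled by fairness of $\A$) or was last overwritten at some index $j^\ast > s$ (handled by the drop-by-one bound $\setcon(\A|_{S\setminus\{p\},A_f})\geq \setcon(\A|_{S,A_f})-1$), and then extract a witness live set $S\in\A|_W$ with $\setcon(\A|_{S,A_f})\geq s$ from the recursive definition of $\setcon$. The only difference is presentational: you prove explicitly the drop-by-one property and the witness-extraction step, which the paper invokes implicitly as following ``by the definition of $\setcon$''.
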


\begin{proof}
Consider a correct BG simulator $b_s$ starting a round after time
$\tau_f$. Let $W_s$ be the value of $W$ at the end of
line~\ref{Alg:Adv:Wselect}. Two cases may arise:
 
 - If $W_s=P_f$, as $\A$ is {\fair}, then
$\setcon(\A|_{W_s,A_f})=\min(|A_f|,\setcon(\A|_{P_f}))$.
Thus, $\setcon(\A|_{W_s,A_f})\geq s$.

 - Otherwise, $W_s$ is set on line~\ref{Alg:Adv:Wselect} to some
$S_{target}\setminus \{p_{target}\}$ at some iteration~$l$, 
with $\setcon(\A|_{S_{target},A_f})\geq l$ for $l>s$.
We have $\setcon(\A|_{W_s,A_f}) =
\setcon((\A|_{S_{target},A})|_{S_{target}\setminus\{p_{target}\}})$
which, by the definition of $\setcon$, 
is greater or equal to $\setcon(\A|_{S_{target},A}) - 1\geq l-1\geq s$,
so we have
$\setcon(\A|_{W_s,A_f})\geq s$.

By the definition of $\setcon$, as $\setcon(\A|_{W_s,A_f})\geq s$, 
there exists $S\subseteq W_s$ such that $\setcon(\A|_{S,A_f})\geq s$.
So, eventually $b_s$ will always succeed the test on line~\ref{Alg:Adv:CheckExists}.
\end{proof}

\begin{lemma}
For every live BG simulator $b_s$, with $s\leq min(|A_f|,\alpha_\A(P_f))$, 
eventually, the value of $W$ computed at the end of iteration $m+1$ (at lines~\ref{Alg:Adv:WselectLoop}--\ref{Alg:Adv:Wselect}) is 
equal to some constant value $W_{m,f}$.\label{lem:StableWindow}
\end{lemma}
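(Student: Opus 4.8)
The plan is to show that every quantity feeding the window loop on lines~\ref{Alg:Adv:WselectLoop}--\ref{Alg:Adv:Wselect} eventually stops changing, so that the partial window produced after processing the indices $\alpha_\A(\Pi),\dots,m+1$ becomes a fixed deterministic function of frozen data. First I would note that the body of this loop at index $j$ reads only the register $R[j]$ and updates $W$ using $W$ itself, the set $A$, and the index $j$; the executing simulator's identifier $i$ enters only through the lower loop bound $i+1$. Consequently, for any live simulator $b_s$ with $s\le m$, the successive values of $W$ obtained while $j$ descends from $\alpha_\A(\Pi)$ to $m+1$ are computed by the very same instructions, and depend only on the starting value $W=P$, on $A$, and on the registers $R[m+1],\dots,R[\alpha_\A(\Pi)]$. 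Thus it suffices to show that these inputs are eventually constant across all rounds started after some fixed time.

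By Lemma~\ref{lem:stable}, there is a time after which $P=P_f$ and $A=A_f$ at every live simulator, so the starting window and the set-consensus-power test of the loop use constant arguments. The crux is to argue that $R[j]$ is eventually never rewritten, for every $j>m$; recall that $R[j]$ is updated only by $b_j$ on line~\ref{Alg:Adv:LSselection-end}. I would split the indices into two ranges. For $m<j\le\min(|A_f|,\alpha_\A(P_f))$, the choice of $b_m$ as the \emph{greatest} live simulator with identifier at most $\min(|A_f|,\alpha_\A(P_f))$ forces $b_j$ to be non-live, hence to take finitely many steps and to stop writing $R[j]$. For $j>\min(|A_f|,\alpha_\A(P_f))$, Lemma~\ref{lem:stable} and the activation test on line~\ref{Alg:Adv:TestActive} imply that once $b_j$ observes $P_f$ and $A_f$ it no longer enters the main block and thus never reaches line~\ref{Alg:Adv:LSselection-end}; a non-live such $b_j$ stops on its own. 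Either way $b_j$ writes $R[j]$ only finitely often.

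It follows that there is a time $\tau_1\ge\tau_f$ after which $P=P_f$, $A=A_f$ at every live simulator and $R[j]$ is constant for all $j>m$. Any live $b_s$ with $s\le m$ begins infinitely many rounds after $\tau_1$; in each such round it starts the window loop with $W=P_f$ and, while $j$ runs from $\alpha_\A(\Pi)$ down to $m+1$, consults only the frozen registers and the constant $A_f$. Since the loop body is deterministic, the value of $W$ at the end of iteration $m+1$ is the same fixed set in every such round and for every such $b_s$; I would call it $W_{m,f}$, which proves the lemma. The main obstacle is precisely the freezing of the registers $R[j]$ for $j>m$: it genuinely requires both the maximality of $m$ (to eliminate the active-but-not-live simulators with indices in $(m,\min(|A_f|,\alpha_\A(P_f))]$) and the combination of Lemma~\ref{lem:stable} with the activation test (to silence the inactive simulators of larger index).
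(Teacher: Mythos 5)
Your proof is correct and follows essentially the same route as the paper's: freeze every input of the window loop ($P$, $A$, and the registers $R[j]$ for $j>m$) after a stabilization time, then invoke determinism of the loop body to conclude that all live simulators $b_s$ with $s\le m$ compute the same window at the end of iteration $m+1$. The only difference is cosmetic: the paper silences all simulators $b_l$ with $l>m$ at once via the definition of $\tau_f$ (by which every active-but-not-live simulator has taken all its steps), whereas you additionally handle the range $j>\min(|A_f|,\alpha_\A(P_f))$ explicitly through the activation test of line~\ref{Alg:Adv:TestActive} combined with Lemma~\ref{lem:stable} --- a slightly more self-contained justification of the same fact.
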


\begin{proof}
No BG simulator $b_l$, with $l> m$, executes lines~\ref{Alg:Adv:LSselection-begin}--\ref{Alg:Adv:LSsimulation-end} after time~$\tau_f$. Therefore $R[l]$ is constant after time~$\tau_f$, $\forall l>m$. As the computation of~$W$, on lines~\ref{Alg:Adv:LSselection-begin}--\ref{Alg:Adv:Wselect}, only depends on the value of $A$, $P$ and $R[l]$, for $\alpha_\A(\Pi)\geq l>m$, all constant after time~$\tau_f$, then the value of $W$ computed at the end of line~\ref{Alg:Adv:Wselect} for iteration $m+1$ 
is the same at every round initiated after time~$\tau_f$ for any 
live BG simulator $b_s$, with $s\leq min(|A_f|,\alpha_\A(P_f))$.
\end{proof}

\remove{
\begin{lemma}
If $A_f$ contains a correct process, then there is
a time after which $S_m(t)$ stabilizes on some $S_{m,f}$ 
satisfying the test on line~\ref{Alg:Adv:CheckCur} 
at every round.\label{lem:SmConverge}
\end{lemma}

\begin{proof}
No BG simulator $b_l$, with $l> m$, executes lines~\ref{Alg:Adv:LSselection-begin}--\ref{Alg:Adv:LSsimulation-end} after time~$\tau_f$. Therefore $R[l]$ is constant after time~$\tau_f$, $\forall l>m$. As the computation of~$W$, on lines~\ref{Alg:Adv:LSselection-begin}--\ref{Alg:Adv:Wselect}, only depends on the value of $A$, $P$ and $R[l]$, for $\alpha_\A(\Pi)\geq l>m$, all constant after time~$\tau_f$, then the value of $W$ computed at the end of line~\ref{Alg:Adv:Wselect} is the same at every round initiated after time~$\tau_f$ for $b_m$. Let $W_{m,f}$ be this value. 

After time $\tau_f$, the success of the test of
line~\ref{Alg:Adv:CheckCur} only depends on the value of~$S_m(t)$. If
the test is successful, then $S_m(t)$ is not modified and thus will
never be. Otherwise, by Lemma~\ref{lem:ValidSetcon}, $b_m$ selects a new live set on line~\ref{Alg:Adv:SelectNew}. This new live set passes the test of line~\ref{Alg:Adv:CheckCur} and thus will never be modified.
\end{proof}

\begin{lemma} 
If $A_f$ contains a correct process, then for any live BG simulator $b_s$, with $s<m$, eventually $S_s(t)\subseteq S_{m,f}$. Moreover, if $p_m(t)$ is eventually constant to some $p_{m,f}$ then eventually $S_s(t)\subseteq (S_{m,f}\setminus\{p_{m,f}\})$.\label{lem:subsetSelection}
\end{lemma}

\begin{proof}
Similarly to Lemma~\ref{lem:SmConverge}'s proof, there is a time after
which $A$, $P$, and $R[l]$, $\forall l> m$ become constant. Therefore
any BG simulator with an identifier smaller or equal to $m$ will
obtain the same value of $W$ at the end of round $m+1$ of the loop of
lines~\ref{Alg:Adv:WselectLoop}--\ref{Alg:Adv:Wselect}. But by Lemma~\ref{lem:SmConverge}, $S_m(t)$ is eventually constant to some value $S_{m,f}$, such that $S_{m,f}$ satisfies the test on line~\ref{Alg:Adv:RecentlyActive} (as it is equivalent to the test on line~\ref{Alg:Adv:CheckCur} for $b_m$). Therefore, any BG simulator with an identifier strictly smaller than $m$ will update $W$ on iteration $m$ of the loop of lines~\ref{Alg:Adv:WselectLoop}--\ref{Alg:Adv:Wselect}  to be equal to $S_{m,f}\setminus\{p_m(t)\}$.

Moreover, during the loop on lines~\ref{Alg:Adv:WselectLoop}--\ref{Alg:Adv:Wselect}, $W$ can only be set to a subset of itself and therefore remains a subset of $S_{m,f}\setminus\{p_m(t)\}$ during the execution of lines~\ref{Alg:Adv:CheckCur}--\ref{Alg:Adv:LSselection-end}. Therefore either $S_s(t)$ is already a subset of $S_{m,f}$ (or of $S_{m,f}\setminus\{p_{m,f}\}$ if $p_m(t)$ is eventually constant), or otherwise, it is set to one on line~\ref{Alg:Adv:SelectNew} (Lemma~\ref{lem:ValidSetcon}).
\end{proof}
}

\begin{lemma}If $A_f$ contains a correct process,
  then the set of processes with an infinite number of simulated steps
  is a live set of $\A$ containing a process of~$A_f$.\label{lem:ContradictoryLemma}
\end{lemma}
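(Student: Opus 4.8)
The plan is to pin down the set $L^*$ of processes that receive infinitely many simulated steps and to show $L^* = S_{m,f}$, the live set on which the top live simulator $b_m$ eventually settles. First I would use Lemma~\ref{lem:bgsim} to fix $b_m$, the greatest live BG simulator with $m \leq \min(|A_f|,\alpha_\A(P_f))$, and observe (as in the proof of Lemma~\ref{lem:StableWindow}) that after $\tau_f$ no simulator with identifier larger than $m$ is active, so each $R[l]$ with $l>m$ is frozen and only the live simulators among $b_1,\dots,b_m$ ever simulate a step again. Combining Lemma~\ref{lem:StableWindow} (the window at the end of iteration $m+1$ stabilizes to $W_{m,f}$) with Lemma~\ref{lem:ValidSetcon} (the test on line~\ref{Alg:Adv:CheckExists} eventually always succeeds), I would argue that $b_m$'s variable $S_{cur}$ converges to a constant live set $S_{m,f}\subseteq W_{m,f}$ selected on line~\ref{Alg:Adv:SelectNew}, with $S_{m,f}\in\A$ and $\setcon(\A|_{S_{m,f},A_f})\geq m$: once a valid set is chosen, the test on line~\ref{Alg:Adv:CheckCur} fails forever and $S_{cur}$ is never rewritten, so the stabilized value of $R[m]$ is $(S_{m,f}.first(),S_{m,f})$.

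Next I would establish the nesting property (the content of the removed subset-selection lemma). For every live $b_s$ with $s<m$, Lemma~\ref{lem:StableWindow} gives that the window it computes after iteration $m+1$ also equals $W_{m,f}$ (same frozen inputs $R[l]$, $l>m$, and, by Lemma~\ref{lem:stable}, the same constant $A=A_f$, $P=P_f$). At iteration $j=m$ it reads the stabilized $R[m]=(S_{m,f}.first(),S_{m,f})$, and since $S_{m,f}\subseteq W_{m,f}$ and $\setcon(\A|_{S_{m,f},A_f})\geq m$, the test on line~\ref{Alg:Adv:RecentlyActive} passes and $W$ is restricted to $S_{m,f}\setminus\{S_{m,f}.first()\}$. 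As subsequent iterations only shrink $W$, every live $b_s$ eventually selects $S_{s,f}\subseteq S_{m,f}$. This yields the upper bound $L^*\subseteq S_{m,f}$, because any process simulated infinitely often must lie in the stabilized live set of some live $b_s$, $s\leq m$.

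For the matching lower bound $S_{m,f}\subseteq L^*$ I would show that no process of $S_{m,f}$ is permanently starved. Each live $b_s$ sweeps its constant live set $S_{s,f}$ in round-robin, advancing $p_{cur}$ on every successful step, so it drives all of $S_{s,f}$ provided it is never blocked forever on a single process. A simulation step can stall only through contention on the underlying safe agreement; contention between two live simulators resolves because both keep taking steps, whereas a step left pending by a simulator that crashed before $\tau_f$ is rescued by the \textbf{AbortStep} of line~\ref{Alg:Adv:LSsimulation-end}, invoking the extended BG-simulation guarantee (cited in Theorem~\ref{thm:adaptiveAbstract}) that a step is never blocked forever by a no-longer-active simulator. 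Hence $b_m$ alone already drives all of $S_{m,f}$, giving $L^*=S_{m,f}$. Finally, since $\setcon(\A|_{S_{m,f},A_f})\geq m\geq 1$ we have $\A|_{S_{m,f},A_f}\neq\emptyset$, so some live set contained in $S_{m,f}$ meets $A_f$; thus $L^*=S_{m,f}\in\A$ and $L^*\cap A_f\neq\emptyset$, as claimed.

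I expect the main obstacle to be the lower-bound coverage step: rigorously ruling out permanent starvation requires opening the BG-simulation black box to separate live-live contention (which completes) from a step pending on a crashed simulator (rescued by abort), and concluding that the round-robin cycles forever through every process of $S_{m,f}$. By comparison, the stabilization and nesting arguments are routine bookkeeping built on Lemmas~\ref{lem:stable},~\ref{lem:ValidSetcon}, and~\ref{lem:StableWindow}.
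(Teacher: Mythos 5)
Your stabilization and nesting steps (fixing $b_m$ via Lemma~\ref{lem:bgsim}, arguing that $b_m$'s selection converges to a constant valid live set $S_{m,f}$, and that every live $b_s$ with $s<m$ eventually confines its selections to $S_{m,f}$) match the first half of the paper's argument, and your upper bound $L^*\subseteq S_{m,f}$ is sound. The gap is exactly where you yourself flagged it: the lower bound. You rule out $b_m$ blocking forever on a single process by asserting that ``contention between two live simulators resolves because both keep taking steps.'' Nothing cited in the paper supports this. The extended-BG guarantee the paper invokes (the footnote in the proof of Theorem~\ref{thm:adaptiveAbstract}) only promises that a step cannot be blocked forever by a simulator that is \emph{no longer active}; it says nothing about two live simulators that repeatedly abort and re-try the same agreement under the abort/retry discipline of Algorithm~\ref{Alg:AdversarySimulation}, and such a pair could in principle livelock each other. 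The algorithm's window mechanism exists precisely because live--live contention is \emph{not} assumed to resolve by itself: as the paper's own intuition states, a conflict is broken because ``the BG simulator with the smaller identifier will eventually change its selected live set and release the corresponding process.'' Note also that under your (literal) reading of the code, smaller simulators only ever exclude $S_{m,f}.first()$, not the process $b_m$ is currently stuck on, so in your version live--live contention on that process can persist forever --- making your unsupported contention claim truly indispensable to your proof.

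The paper's proof is structured so that this claim is never needed. It splits on whether $b_m$'s round-robin pointer $p_m(t)$ stabilizes. If it never stabilizes, then by the very definition of that case $b_m$ completes steps infinitely often and alone drives all of $S_{m,f}$ --- no liveness property of the agreement is invoked. If it stabilizes on some $p_{m,f}$, then either $|S_{m,f}|=1$ (which forces $m=1$, so there is no live competitor and only no-longer-active blockers remain, handled by the abort mechanism), or $|S_{m,f}|>1$, in which case stabilization means $b_m$ is blocked forever; but then every smaller live simulator eventually selects its live set inside $S_{m,f}\setminus\{p_{m,f}\}$ and stops touching $p_{m,f}$ altogether, leaving only no-longer-active blockers --- again covered by the abort guarantee --- a contradiction. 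In short, where you try to prove ``$b_m$ is never blocked'' from a generic liveness property of the simulation, the paper proves ``if $b_m$ is blocked, the selection mechanism removes all live competitors from the blocked process,'' which is the only statement the cited machinery actually supports. Repairing your proof requires replacing your contention claim with this case analysis (or proving from scratch the missing liveness property of the abortable agreement).
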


\begin{proof}
As $b_m$ is live, it proceeds to an infinite number of rounds. 
By Lemma~\ref{lem:StableWindow}, eventually $b_m$ computes the same window in every 
round. By Lemma~\ref{lem:ValidSetcon}, if $b_m$ does not have a valid live set 
selected, then it eventually selects a valid one for~$W_{m,f}$. 
Thus, eventually $b_m$ never changes its selected live set. 
Let $S_{m,f}$ be this live set. 
Afterwards, in each round, $b_m$ tries to complete a simulation step of $p_m(t)$ and, 
if successfully completed, changes $p_m(t)$ in a round robin manner among $S_{m,f}$. 
Two cases may arise:

 - If $p_m(t)$ never stabilizes, then the set of processes with 
an infinite number of simulated steps includes $S_{m,f}$. 
By Lemma~\ref{lem:StableWindow}, every other live BG simulator
with a smaller identifier computes the same value of $W$
at the end of round $m+1$ (of the loop at lines~\ref{Alg:Adv:WselectLoop}--\ref{Alg:Adv:Wselect}). 
Thus, after the $S_{m,f}$ is selected by $b_m$, as $S_{m,f}$ is valid, 
every BG simulator will select a subset of $S_{m,f}$ for its window value in every round. 
Moreover, by Lemma~\ref{lem:ValidSetcon}, 
these BG simulators will always find valid live sets to select, 
and so they will eventually simulate only processes in $S_{m,f}$.
Thus, the set of processes with infinitely many simulated steps 
is equal to $S_{m,f}$, a live set intersecting with $A_f$.

 - Otherwise, $p_m(t)$ eventually stabilizes on some $p_{m,f}$. 
Therefore, $b_m$ attempts to complete a simulation step of $p_{m,f}$ infinitely often.
Two sub-cases may arise:
\begin{itemize}
\item Either $|S_{m,f}|=1$ and, therefore, $b_m$ is the only one live
BG simulator performing simulation steps, and thus, 
the set of processes with an infinite number of simulated steps 
is equal to $S_{m,f}$, a live set intersecting with $A_f$.
\item Otherwise, by Lemma~\ref{lem:StableWindow}, 
every live BG simulator with a smaller identifier eventually 
selects a window, and thus a live set (Lemma~\ref{lem:ValidSetcon}),
which is a subset of $S_{m,f}\setminus\{p_{m,f}\}$. Thus 
every live BG simulator with a smaller identifier eventually
selects processes to simulate distinct from $p_{m,f}$ and, thus, 
cannot block $b_m$ infinitely often---a contradiction.
\end{itemize}
\end{proof}

\begin{lemma}
If $\A$ is {\fair}, then any task $T$ solvable in the $\A$-model is solvable in the $\alpha_\A$-model.\label{lem:AdvReduction}
\end{lemma}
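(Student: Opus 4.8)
The plan is to exhibit an explicit algorithm solving $T$ in the $\alpha_\A$-model and to argue its correctness, relying on the machinery already assembled rather than re-deriving it. Fix an algorithm $\Alg$ that solves $T$ in the $\A$-model. The algorithm for the $\alpha_\A$-model is the one described above: each process first records its input in $P_{MEM}$ and then participates in an $\alpha_\A$-adaptive BGG simulation in which the BG simulators run Algorithm~\ref{Alg:AdversarySimulation} to simulate $\Alg$; a process returns the output of its simulated counterpart as soon as it observes $P_{MEM}[p]=\top$. Since the $\alpha_\A$-model has agreement function $\alpha_\A$ (Theorem~\ref{lem:activeResilency}), it solves the $\alpha_\A$-adaptive set consensus task (Theorem~\ref{thm:AdaptiveAgreement}); hence this simulation is runnable in the $\alpha_\A$-model, and BGG simulation guarantees that whenever the participating set is $P$ at most $\alpha_\A(P)$ BG simulators are active, with at least one live as long as some correct process still lacks an output.

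First I would dispatch validity. Every finite execution of $\Alg$ is a prefix of some infinite $\A$-compliant run, and $\Alg$ is safe on every such run; hence every output the simulation ever produces is a legal output of $T$ for the given input vector, and the values relayed to the real processes are valid. This gives property~(1) of task solvability.

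The core is termination, which I would prove by contradiction; this is where Lemma~\ref{lem:ContradictoryLemma} does the work. Suppose some correct process $p$ of the $\alpha_\A$-model never obtains an output. Then $P_{MEM}[p]$ is never set to $\top$, so $p\in A_f$ and $A_f$ contains a correct process. Lemma~\ref{lem:ContradictoryLemma} then asserts that the set of processes taking infinitely many \emph{simulated} steps is a live set $S$ of $\A$ with $S\cap A_f\neq\emptyset$. Consequently the simulated execution of $\Alg$ is $\A$-compliant---its set of correct processes is exactly the live set $S$---yet a process $q\in S\cap A_f$ is simulated as correct while never producing an output. This contradicts the assumption that $\Alg$ solves $T$ in the $\A$-model, where every correct process of an $\A$-compliant run eventually decides. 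Hence $A_f$ contains no correct process: every correct process is eventually marked $\top$ and returns, giving property~(2).

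I expect the only subtlety---and it is already absorbed into the preceding lemmas---to be the guarantee that the simulated set of correct processes is \emph{exactly} a live set intersecting $A_f$, rather than merely some superset or an unrelated set. This is precisely what fairness of $\A$ buys, through the window-selection mechanism of Algorithm~\ref{Alg:AdversarySimulation} analysed in Lemmas~\ref{lem:ValidSetcon}--\ref{lem:ContradictoryLemma}. Once that is in hand, the present lemma is merely the assembly of validity and termination.
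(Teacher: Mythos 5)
Your proof is correct and follows essentially the same route as the paper's: validity via extension of finite runs to $\A$-compliant ones, and termination by contradiction through Lemma~\ref{lem:ContradictoryLemma}, concluding that the live set of infinitely-simulated processes would both intersect $A_f$ and consist of processes that must receive outputs. The only cosmetic difference is that the paper frames the entire argument as one global contradiction, while you separate validity (proved directly) from termination (proved by contradiction), which if anything reads more cleanly.
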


\begin{proof}
Let us assume that it is not the case: there exists a task $T$ and a
{\fair} adversary $\A$ such that $T$ is solvable in the adversarial
$\A$-model but not in the $\alpha_\A$-model. As every finite run 
of the $\A$-model can be extended to and $\A$-compliant run, the 
simulated algorithm can only provide valid outputs to the simulated 
processes. Thus, it can only be the case that a correct process is 
not provided with a task output, i.e., belongs to $A_f$. 

Therefore, by Lemma~\ref{lem:ContradictoryLemma}, 
the simulation provides an $\A$-compliant run, i.e.,
the set of processes with an infinite number of simulated steps is a live set.
As the run is $\A$-compliant then each process $p$ 
with an infinite number of simulated steps 
is eventually provided with a task output 
and thus $p_{MEM}[p]$ is set to $\top$. 
Thus, they cannot belong to $A_f$ --- a contradiction.
\end{proof}

Combining Corollary~\ref{cor:adaptive} and Lemma~\ref{lem:AdvReduction} we obtain the following result:
\begin{theorem}
\label{th:adv:task}
For any {\fair} adversary $\A$, the adversarial $\A$-model and the $\alpha_\A$-model are equivalent regarding task solvability.
\end{theorem}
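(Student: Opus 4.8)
The plan is to establish the equivalence by proving the two directions of task-computability containment separately and then combining them.

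For the direction from the $\alpha_\A$-model to the $\A$-model, I would invoke Corollary~\ref{cor:adaptive}. By Theorem~\ref{th:adv}, the agreement function of the adversarial $\A$-model is exactly $\alpha_\A$, so instantiating the corollary with $M$ equal to the $\A$-model yields immediately that every task solvable in the $\alpha_\A$-model is solvable in the $\A$-model. This inclusion holds for \emph{every} adversary: it makes no use of fairness and relies only on the adaptive set-consensus algorithm of Section~\ref{sec:adapt} driving the BGG simulation of Theorem~\ref{thm:adaptiveAbstract}.

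For the converse direction, I would appeal directly to Lemma~\ref{lem:AdvReduction}: when $\A$ is {\fair}, any task solvable in the $\A$-model is also solvable in the $\alpha_\A$-model. This is the substantive half, carried out by the simulation of Algorithm~\ref{Alg:AdversarySimulation} and its stabilization analysis (Lemmas~\ref{lem:stable}--\ref{lem:ContradictoryLemma}). Combining the two containments shows that the $\A$-model and the $\alpha_\A$-model solve exactly the same set of tasks, which is the claimed equivalence.

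The entire difficulty is concentrated in the converse direction, and this is precisely where fairness is indispensable. The delicate requirement is that the simulation must generate an $\A$-compliant run---so that its task outputs are valid and faithfully mirror the behavior of the simulated algorithm---while at the same time guaranteeing that some correct process still lacking an output eventually receives one. Fairness is exactly the property ensuring that each live BG simulator $b_s$ can always find, within its selection window, a live set $S$ with $\setcon(\A|_{S,A_f})\geq s$ (Lemma~\ref{lem:ValidSetcon}); absent this guarantee, the live-set selection could stall and the resulting simulated run might fail to be a valid $\A$-run.
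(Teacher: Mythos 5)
Your proposal is correct and follows essentially the same route as the paper, which likewise obtains the theorem by combining Corollary~\ref{cor:adaptive} (instantiated for the $\A$-model via Theorem~\ref{th:adv}) with Lemma~\ref{lem:AdvReduction}. Your accompanying remarks---that the first containment holds for arbitrary adversaries and that fairness enters only through the live-set selection argument of Lemma~\ref{lem:ValidSetcon}---accurately reflect where the paper's proof actually uses the fairness hypothesis.
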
 
  
\remove{
\begin{proof}
The ``if'' direction is implied by Lemma~\ref{lem:adaptive}. 
The ``only if'' direction is shown in Lemma~\ref{lem:AdvReduction}.
\end{proof}
}

\section{Agreement functions do not always tell it all}
\label{sec:counter}

We observe that agreement functions are not able to characterize
the task computability power of \emph{all} models. 
In particular there are non-{\fair} adversaries 
not captured by their agreement functions.

Consider for example the adversary
$\A=\{\{p_1\},\{p_2,p_3\},\{p_1,p_2,p_3\}\}$.
It is easy to see that $\setcon(\A)=2$, but that
$\setcon(\A|_{\Pi,\{p_2,p_3\}})= 1$ which is strictly smaller than
$\min(|\{p_2,p_3\}|,\setcon(\A))= 2$.
Therefore, $\A$ is non-{\fair}.

Consider the task $Cons_{2,3}$ consisting in consensus among $p_2$ and
$p_3$: every process in $\{p_2,p_3\}$ proposes a value and every
correct process in $\{p_2,p_3\}$ decides a proposed value, so that $p_2$ and $p_3$ cannot decide
different values.
%
$Cons_{2,3}$ is solvable in the adversarial $\A$-model:
every process in $\{p_2,p_3\}$ simply waits until $p_2$ writes its proposed value and decides on
it. Indeed, this protocol solves $Cons_{2,3}$ in the $\A$-model as if
$p_3$ is correct, $p_2$ is also correct.

The agreement function of $\A$, $\alpha_{\A}$, is equal to $0$ for
$\{p_2\}$ or $\{p_3\}$, to $2$ for $\{p_1,p_2,p_3\}$, and to $1$ for
all other values.
It is easy to see that $\alpha_{\A}$ only differs from
$\alpha_{1-res}$, the agreement function of the $1$-resilient
adversary, for $\{p_1\}$ where
$\alpha_{\A}(\{p_1\})=1>\alpha_{1-res}(\{p_1\})=0$.
Therefore, $\forall P\subseteq\Pi, \alpha_{\A}(P)\geq \alpha_{1-res}(P)$, and thus any task solvable in the $\A$-model is solvable in the $1$-resilient model.  

The impossibility of solving such a task $1$-resiliently can be
directly derived from the characterization of task solvable
$t$-resiliently from~\cite{Gaf09-EBG}.
Indeed, let $p_1$ wait for some
process to output in order to decide the same value. Processes $p_2$
and $p_3$ use the ability to solve consensus among themselves to
output a unique value. As there are two correct processes in the
system, $p_2$ or $p_3$ will eventually terminate and thus $p_1$ will
not wait indefinitely. This gives a 
$3$-process $1$-resilient consensus algorithm---a
contradiction\cite{FLP85,LA87}.
Thus, the $\A$-model is not equivalent with the
$\alpha_{\A}$-model, even though they have the same agreement function.

\remove{
Consider, for example, a $3$-process \emph{test-and-set} model $M_{TS}$ consisting of runs of the type
$q_1q_1q_2q_2q_3q_3\ldots$, $q_i\in\{p_1,p_2,p_3\}$.
In other words, a run of $M_{TS}$ is a sequence of fragments,
where in each segment, a selected process performs an update followed
by a snapshot.
Let $\alpha_{TS}$ denote the agreement function of $M_{TS}$.

By Lemma~\ref{lem:activeResilency}, $\alpha_{TS}$ is also the agreement function of~$M_{\alpha_{TS}}$.

\begin{theorem}
There exists a task solvable in $M_{TS}$, but not in $M_{\alpha_{TS}}$. 
\end{theorem}
\begin{proof}
It is easy to see that $\alpha_{TS}$,  the agreement function of $M_{TS}$,
satisfies $\alpha_{TS}(P)=1$, whenever $|P|\leq 2$.
Without loss of generality, suppose that $P=\{p_1,p_2\}$.
Every process $p_I$ first writes its input $v_i$.
If the first snapshot taken by $p_i$ only contains $v_i$,
$p_i$ outputs $v_i$.
Otherwise, it outputs the input of the other process.
Recall that $M_{TS}$ guarantees that the snapshots are distinct, i.e.,
either $p_1$ sees only itself and $p_2$ sees both processes, or vice
versa.

If all three processes participate ($|P|=3$), $\alpha_{TS}(P)=2$.
Indeed, assume that there is a consensus algorithm for three processes
in $M_{TS}$.
Using standard valency arguments, we derive that the algorithm has a
\emph{critical bivalent finite run} $R=q_1q_1 \ldots q_mq_m$, such
that $R\cdot pp$ is $0$-valent and $R\cdot qq$ is $1$-valent.
But then $R\cdot ppqq$ and $R\cdot qqpp$ cannot be distinguished by
the third process $r$ in its solo extension.
On the other hand, by making $p_1$ and $p_2$ solve consensus among themselves and
by letting $p_3$ simply outputs its input, we get a $2$-set consensus algorithm. 
Thus, $\alpha_{TS}(P)=2$ for~$|P|=3$.  

Consider the task $T_{PC}$ of \emph{pairwise consensus}: every process $p_i$
($i=1,2,3$) proposes a natural value as
an input and produces a tuple  $(v_{i,1},v_{i,2},v_{i,3})$ as an output.
It is guaranteed that each value $v_{i,j}$ in the output tuple is an input of
some process in $\{p_i,p_j\}$ and, for all $i,j=1,2,3$, if $p_i$ and $p_j$ produce
output, then $v_{i,j}=v_{j,i}$.
In other words solving $T_{PC}$ consists in solving consensus among
each pair of processes.

We observe that $M_{TS}$ solves $T_{PC}$.  
Each process $p_i$ writes its input and takes a snapshot.
If the snapshot does not contain the value of process $p_j$, then
$p_i$ considers itself the winner of consensus between $p_i$ and
$p_j$ and outputs its input as $v_{i,j}$. Otherwise it outputs the
input of $p_j$ as $v_{i,j}$.     
%

Clearly, $M_{\alpha_{TS}}$ is not able to solve $T_{PC}$: consider the
set of runs of $M_{\alpha_{TS}}$ with participating set $P=\{p_1,p_2\}$.
Since $\alpha_{TS}(P)=2$, we get a set of $1$-resilient $2$-process
runs in which ($2$-process) consensus cannot be solved. 

As we have seen, $T_{PC}$ can
be solved in $M_{TS}$, but cannot be solved in $M_{\alpha_{TS}}$.
Thus, $M_{TS}$ and $M_{\alpha_{TS}}$ have the same agreement function, but
disagree on the set of tasks that they can solve.
\end{proof}
}

\section{Related work}
\label{sec:related}
Adversarial models were introduced by Delporte et
al. in~\cite{DFGT11}.
With respect to colorless tasks, Herlihy and Rajsbaum~\cite{HR12}
characterized a class \emph{superset-closed}~\cite{Kuz12}
adversaries (closed under the superset operation) via their minimal
core sizes. 
Still with respect to colorless tasks, Gafni and Kuznetsov~\cite{GK10}
derived a characterization of general adversary using its
\emph{consensus power} function $\setcon$.  
A side result of this present paper is an extension of the characterization
in~\cite{GK10} to any (not necessarily colorless) tasks.

Taubenfeld introduced in~\cite{Tau10} the notion of a symmetric progress
conditions, which is equivalent to our symmetric adversaries. 

%
The BG simulation establishes equivalence between $t$-resilience and wait-freedom with respect to task
solvability~\cite{BG93a,BGLR01,Gaf09-EBG}. Gafni and Guerraoui~\cite{GG11-univ} showed that if a model allows
for solving $k$-set consensus, then it can be used to simulate a \emph{$k$-concurrent} system in which at
most $k$ processes are concurrently invoking a task.
In our simulation, we use the fact that a model $M$ associated to an agreement function $\alpha_M$
allows to solve an $\alpha$-adaptive set consensus, using the technique proposed in~\cite{DFGK15},
which enables a composition of the ideas in~\cite{BG93a,BGLR01,Gaf09-EBG} and \cite{GG11-univ}. 
Running BG simulation on top of a $k$-concurrent system,
we are able to derive the equivalence between fair adversaries 
and their corresponding $\alpha$-models.


\section{Concluding remarks}
\label{sec:conc}

By Theorem \ref{th:adv:task}, task
computability
of a  {\fair} adversary $\A$ is \emph{characterized} by its 
agreement function $\alpha$: a task is solvable with $\A$ if and only
if it is solvable in the $\alpha$-model.
The result implies characterizations of superset-closed~\cite{HR10,Kuz12} and
symmetric~\cite{Tau10} adversaries and, via the equivalence
result established in~\cite{GG09}, the model of $k$-concurrency.

As a  corollary, for all models $M$ and $M'$ characterized by their
agreements functions, such that $\forall P\in
\Pi,\alpha_{M'}(P)\geq\alpha_M(P)$,
we have that $M$ is \emph{stronger} than $M'$, i.e., 
the set of tasks solvable in $M$ contains the set of tasks solvable in
$M'$.
In particular, if the two agreement functions are equal, then $M$ and
$M'$ solve exactly the same sets of tasks.
Note that if  a model $M$ is characterized by its agreement function $\alpha$, then it
belongs to the weakest equivalence class among the models whose
agreement function is $\alpha$.

An intriguing open question is therefore how to precisely determine the scope of the
approach based on agreement functions and if it can be extended to capture larger classes of models.

\bibliographystyle{abbrv}

\def\noopsort#1{} \def\No{\kern-.25em\lower.2ex\hbox{\char'27}}
  \def\no#1{\relax} \def\http#1{{\\{\small\tt
  http://www-litp.ibp.fr:80/{$\sim$}#1}}}

\end{document}